\def\eqref#1{equation~\ref{#1}}
\def\1{\bm{1}}
\def\eps{{\epsilon}}
\def\va{{\bm{a}}}
\def\vf{{\bm{f}}}
\def\vy{{\bm{y}}}
\def\mY{{\bm{Y}}}
\DeclareMathAlphabet{\mathsfit}{\encodingdefault}{\sfdefault}{m}{sl}
\SetMathAlphabet{\mathsfit}{bold}{\encodingdefault}{\sfdefault}{bx}{n}
\def\gA{{\mathcal{A}}}
\def\gG{{\mathcal{G}}}
\def\gL{{\mathcal{L}}}
\def\gS{{\mathcal{S}}}
\def\gT{{\mathcal{T}}}
\def\gU{{\mathcal{U}}}
\def\gX{{\mathcal{X}}}
\def\gZ{{\mathcal{Z}}}
\newcommand{\E}{\mathbb{E}}
\newcommand{\R}{\mathbb{R}}
\newtheorem{lemma}{Lemma}
\newcommand{\beginsupplement}{%
        \setcounter{table}{0}
        \setcounter{figure}{0}
        \setcounter{equation}{0}
        \renewcommand{\thetable}{S\arabic{table}}
        \renewcommand\thefigure{S\arabic{figure}}
        \renewcommand\theequation{S\arabic{equation}}
        \renewcommand{\theHtable}{Supplement.\thetable}
        \renewcommand{\theHfigure}{Supplement.\thefigure}
     }
\title{PhyloGFN: Phylogenetic inference\\with generative flow networks}
\author{%
\makebox[0.22\linewidth]{Mingyang Zhou\textsuperscript{*}}\\
\makebox[0.22\linewidth]{McGill University}\\
\And
\makebox[0.22\linewidth]{Zichao Yan\textsuperscript{*}}\\
\makebox[0.22\linewidth]{McGill University}\\
\makebox[0.22\linewidth]{Universit\'e de Montr\'eal, Mila}\\
\And
\makebox[0.22\linewidth]{Elliot Layne}\\
\makebox[0.22\linewidth]{McGill University, Mila}\\
\And
\makebox[0.22\linewidth]{Nikolay Malkin}\\
\makebox[0.22\linewidth]{Universit\'e de Montr\'eal, Mila}\\
\And
\makebox[0.22\linewidth]{Dinghuai Zhang}\\
\makebox[0.22\linewidth]{Universit\'e de Montr\'eal, Mila}\\ 
\And
\makebox[0.22\linewidth]{Moksh Jain}\\
\makebox[0.22\linewidth]{Universit\'e de Montr\'eal, Mila}\\ 
\And
\makebox[0.22\linewidth]{Mathieu Blanchette}\\
\makebox[0.22\linewidth]{McGill University, Mila}\\
\And
\makebox[0.22\linewidth]{Yoshua Bengio\textsuperscript{$\dagger$}}\\
\makebox[0.22\linewidth]{Universit\'e de Montr\'eal, Mila\vphantom{\thanks{Equal contribution. \textsuperscript{$\dagger$}CIFAR Senior Fellow.}}}\\
\AND
\small\tt
\{ming.zhou,zichao.yan,elliot.layne\}@mail.mcgill.ca, blanchem@cs.mcgill.ca\\
\small\tt
\{nikolay.malkin,dinghuai.zhang,moksh.jain,yoshua.bengio\}@mila.quebec
}
\newcommand{\ie}{\textit{i.e.}}
\newcommand{\eg}{\textit{e.g.}}
\def\section{\@startsection {section}{1}{\z@}{-0.3ex}{0.3ex}{\large\sc\raggedright}}
\def\subsection{\@startsection{subsection}{2}{\z@}{-0.2ex}{0.2ex}{\normalsize\sc\raggedright}}
\def\subsubsection{\@startsection{subsubsection}{3}{\z@}{-0.1ex}{0.1ex}{\normalsize\sc\raggedright}}
\def\paragraph{\@startsection{paragraph}{4}{\z@}{0ex}{-1em}{\normalsize\bf}}
\def\subparagraph{\@startsection{subparagraph}{5}{\z@}{0ex}{-1em}{\normalsize\sc}}
\newcommand{\zerodisplayskips}{%
  \setlength{\abovedisplayskip}{1.25mm}%
  \setlength{\belowdisplayskip}{1.25mm}%
  \setlength{\abovedisplayshortskip}{1mm}%
  \setlength{\belowdisplayshortskip}{1mm}}
\appto{\normalsize}{\zerodisplayskips}
\appto{\small}{\zerodisplayskips}
\appto{\footnotesize}{\zerodisplayskips}
\begin{document}

\maketitle

\begin{abstract}
Phylogenetics is a branch of computational biology that studies the evolutionary relationships among biological entities. Its long history and numerous applications notwithstanding, inference of phylogenetic trees from sequence data remains challenging: the extremely large tree space poses a significant obstacle for the current combinatorial and probabilistic techniques. In this paper, we adopt the framework of generative flow networks (GFlowNets) to tackle two core problems in phylogenetics: parsimony-based and Bayesian phylogenetic inference. Because GFlowNets are well-suited for sampling complex combinatorial structures, they are a natural choice for exploring and sampling from the multimodal posterior distribution over tree topologies and evolutionary distances. We demonstrate that our amortized posterior sampler, PhyloGFN, produces diverse and high-quality evolutionary hypotheses on real benchmark datasets. PhyloGFN is competitive with prior works in marginal likelihood estimation and achieves a closer fit to the target distribution than state-of-the-art variational inference methods. Our code is available at \url{https://github.com/zmy1116/phylogfn}.




\end{abstract}

\section{Introduction}


Phylogenetic inference has long been a central problem in the field of computational biology. 
Accurate phylogenetic inference is critical for a number of important biological analyses, such as understanding the development of antibiotic resistance \citep{aminov2007evolution, ranjbar2020phylogenetic, layne2020supervised}, assessing the risk of invasive species \citep{hamelin2022genomic, dort2023large}, and many others. 
Accurate phylogenetic trees can also be used to improve downstream computational analyses, such as multiple genome alignment \citep{blanchette2004aligning}, ancestral sequence reconstruction \citep{ma2006reconstructing}, protein structure and function annotation \citep{celniker2013consurf}.

Despite its strong medical relevance and wide applications in life science, phylogenetic inference has remained a standing challenge, in part due to the high complexity of tree space — for $n$ species, $(2n-5)!!$ unique unrooted bifurcating tree topologies exist. This poses a common obstacle to all branches of phylogenetic inference; both maximum-likelihood and maximum-parsimony tree reconstruction are  NP-hard problems~\citep{day1987computational,chor2005maximum}. Under the Bayesian formulation of phylogenetics, the inference problem is further compounded by the inclusion of continuous variables that capture the level of sequence divergence along each branch of the tree.

One line of prior work considers Markov chain Monte Carlo (MCMC)-based approaches, such as MrBayes~\citep{ronquist2012mrbayes}. These approaches have been successfully applied to Bayesian phylogenetic inference. However, a known limitation of MCMC is scalability to high-dimensional distributions with multiple separated modes~\citep{tjelmeland2001mode}, which arise in larger phylogenetic datasets. Recently, variational inference (VI)-based approaches have emerged.
 Among these methods, 
some model only a limited portion of the space of tree topologies, while others are weaker in marginal likelihood estimation due to simplifying assumptions. In parsimony analysis, state-of-the-art methods such as PAUP*~\citep{swofford1998phylogenetic} have extensively relied on heuristic search algorithms that are efficient but lack theoretical foundations and guarantees.

Coming from the intersection of variational inference and reinforcement learning is the class of models known as generative flow networks \citep[GFlowNets;][]{bengio2021flow}. The flexibility afforded by GFlowNets to learn sequential samplers for distributions over compositional objects makes them a promising candidate for performing inference over the posterior space of phylogenetic tree topologies and evolutionary distances. 

In this work, we propose \textbf{PhyloGFN}, the first adaptation of GFlowNets to the task of Bayesian and parsimony-based phylogenetic inference. Our contributions are as follows:
\begin{enumerate}[left=0pt,nosep,label=(\arabic*)]
    \item  We design an acyclic Markov decision process (MDP) with fully customizable reward functions, by which our PhyloGFN can be trained to construct phylogenetic trees in a bottom-up fashion. 
    \item PhyloGFN leverages a novel tree representation inspired by Fitch and Felsenstein's algorithms to represent rooted trees without introducing additional learnable parameters to the model. PhyloGFN is also coupled with simple yet effective training techniques such as using mixed on-policy and dithered-policy rollouts, replay buffers and cascading temperature-annealing.
    \item PhyloGFN has the capacity to explore and sample from the entire phylogenetic tree space, achieving a balance between exploration in this vast space and high-fidelity modeling of the modes. While PhyloGFN performs on par with the state-of-the-art MCMC- and VI-based methods in the summary metric of marginal log-likelihood,  
    it substantially outperforms these approaches in terms of its ability to estimate the posterior probability of suboptimal trees.
\end{enumerate}

\section{Related work}


Markov chain Monte Carlo (MCMC)-based algorithms are commonly employed for Bayesian phylogenetics, with notable examples including MrBayes and RevBayes \citep{ronquist2012mrbayes, hohna2016}, which are considered state-of-the-art in the field. Amortized variational inference (VI) is an alternative approach that parametrically estimates the posterior distribution. VBPI-GNN~\citep{zhang2023learnable} employs subsplit Bayesian networks (SBN)~\citep{zhang2018generalizing} to model tree topology distributions and uses graph neural networks to learn tree topological embeddings~\citep{zhang2023learnable}. While VBPI-GNN has obtained marginal log likelihood competitive with MrBayes in real datasets, it requires a pre-generated set of high-quality tree topologies to constrain its action space for tree construction, which ultimately limits its ability to model the entire tree space. 

There exist other VI approaches that do not limit the space of trees. VaiPhy~\citep{koptagel2022vaiphy} approximates the posterior distribution in the augmented space of tree topologies, edge lengths, and ancestral sequences. Combined with combinatorial sequential Monte Carlo \citep[CSMC;][]{moretti}, the proposed method enables faster estimation of marginal likelihood. GeoPhy~\citep{mimori2023geophy} models the tree topology distribution in continuous space by mapping continuous-valued coordinates to tree topologies, using the same technique as VBPI-GNN to model tree topological embeddings. While both methods model the entire tree topology space, their performance on marginal likelihood estimation underperforms the state of the art. 

For the optimization problem underpinning maximum parsimony inference, PAUP* is one of the most commonly used programs~\citep{swofford1998phylogenetic}; it features several fast, greedy, and heuristic algorithms based on local branch-swapping operations such as tree bisection and reconnection. 

GFlowNets are a family of methods for sampling discrete objects from multimodal distributions, such as molecules~\citep{bengio2021flow} and biological sequences~\citep{jain2022biological}, and are used to solve discrete optimization tasks
~\citep{robust-scheduling,cogfn}. With their theoretical foundations laid out in~\citet{bengio2021foundations,lahlou2023cgfn}, and connections to variational inference established in 
\citet{malkin2022gfnhvi}, 
GFlowNets have been successfully applied to tackle complex Bayesian inference problems, such as inferring latent causal structures in gene regulatory networks~\citep{deleu2022bayesian,deleu2023jsp}, and 
parse trees in hierarchical grammars \citep{hu2023gfnem}.

\section{Background}
\subsection{Phylogenetic inference}

Here we introduce the problems of Bayesian and parsimony-based phylogenetic inference. A weighted phylogenetic tree is denoted by $(z,b)$, where $z$ represents the tree topology with its leaves labeled by observed sequences, and $b$ represents the branch lengths. 
The tree topology can be either a rooted binary tree or a bifurcating unrooted tree. For a tree topology $z$, let $L(z)$ denote the labeled sequence set and $E(z)$ the set of edges. For an edge $e \in E(z)$, let $b(e)$ denote its length. Let $\mY = \{\vy_1, \vy_2 \dots \vy_n\} \in \Sigma^{n \times m}$ be a set of $n$ observed sequences, each having $m$ characters from alphabet $\Sigma$, \eg, $\{A,C,G,T\}$ for DNA sequences.
We denote the $i^{\rm th}$ site of all sequences by $\mY^i = \{\vy_1[i], \vy_2[i] \dots \vy_n[i]\}$.  In this work, we make two assumptions that are common in the phylogenetic inference literature: (i) sites evolve independently
; (ii) evolution follows a time-reversible substitution model. The latter implies that an unrooted tree has the same parsimony score or likelihood as its rooted versions, and thus the algorithms we introduce below (Fitch and Felsenstein) apply to unrooted trees as well. 

\subsubsection{Bayesian inference}\label{sec:background_bayesian_inference}
In Bayesian phylogenetic inference, we are interested in sampling from the posterior distribution over weighted phylogenetic trees $(z,b)$, formulated as:
\begin{gather*}
        P(z, b\mid\mY)  = \frac{P(z,b)P(\mY\mid z,b) }{P(\mY)}
\end{gather*}
where $P(\mY\mid z,b)$ is the likelihood, $P(\mY)$ is the intractable marginal, and $P(z,b)$ is the prior density over tree topology and branch lengths. Under the site independence assumption, the likelihood can be factorized as: $P(\mY\mid z,b) = \prod_{i} P(\mY^i\mid z,b)$, and each factor is obtained by marginalizing over all internal nodes $a_j$ and their possible character assignment: 
\begin{gather*}
    P(\vy_1[i] \dots \vy_n[i]\mid z,b) = \sum_{a_{n+1}^i, \dots a_{2n-1}^i} P(a_{2n-1}) \prod_{j=n+1}^{2n-2}P(a_j^i| a^i_{\alpha(j)}, b(e_j))\prod_{k=1}^{n}P(\vy_k[i]| a^i_{\alpha(k)}, b_z(e_k))
\end{gather*}
where $a_{n+1}^i, \dots a_{2n-2}^i$ represent the internal node characters assigned to site $i$ and $\alpha(i)$ represent the parent of node $i$. $P(a_{2n-1})$ is a distribution at the root node, which is usually assumed to be uniform over the vocabulary, while the conditional probability $P(a_j^i|a^i_{\alpha(j)}, b(e_j))$ is defined by the substitution model (where $e_j$ is the edge linking $a_j$ to $\alpha(a_j)$).

\paragraph{Felsenstein's algorithm} The likelihood of a given weighted phylogenetic tree can be calculated efficiently using Felsenstein's pruning algorithm \citep{felsenstein1973maximum} in a bottom-up fashion through dynamic programming. Defining $L^i_u$ as the leaf sequence characters at site $i$ below the internal node $u$, and given its two child nodes $v$ and $w$, the conditional probability $P(L^i_u|a^i_u)$ can be obtained from $P(L^i_v|a^i_v)$ and $P(L^i_w|a^i_w)$:
\begin{equation}
    P(L^i_u\mid a^i_u)  =  \sum_{a^i_v, a^i_w \in \Sigma} P(a^i_v\mid a^i_u, b(e_v)) P(L^i_v\mid a^i_v) P(a^i_w\mid a^i_u, b(e_w)) P(L^i_w\mid a^i_w).
    \label{eq:felsenstein}
\end{equation}

The dynamic programming, or recursion, is essentially a post-order traversal of the tree and $P(L^i_u\mid a^i_u)$ is calculated at every internal node $u$, and we use one-hot encoding of the sequences to represent the conditional probabilities at the leaves. Finally, the conditional probability for each node $u$ at site $i$ is stored in a data structure $\vf^i_u \in [0,1]^{|\Sigma|}$: $\vf_u^i[c]=P(L_u^i|a_u^i=c)$, and we call it the $\textbf{Felsenstein feature}$ for node $u$. Note that the conditional probability at the root $P(\mY^i|a^i_{2n-1})$ is used to calculate the likelihood of the tree: $P(\mY^i|z,b) = \sum_{a^i_{2n-1} \in \Sigma} P(a_{2n-1}^i) P(\mY^i|a^i_{2n-1})$.

\subsubsection{Parsimony analysis} \label{sec:background_parsimony}

The problem of finding the optimal tree topology under the maximum parsimony principle is commonly referred as the Large Parsimony problem, which is NP-hard. For a given tree topology $z$, the parsimony score is the minimum number of character changes between sequences across branches obtained by optimally assigning sequences to internal nodes. Let $M(z|\mY)$ be the parsimony score of tree topology $z$ with leaf labels $\mY$. Due to site independence, $M(z|\mY) = \sum_i M(z|\mY^i)$. 
The trees with the lowest parsimony score, or most parsimonious trees, are solutions to the Large Parsimony problem. Note that the Large Parsimony problem is a limiting case of the maximum likelihood tree inference problem, where branch lengths are constrained to be equal and infinitesimally short.

\paragraph{Fitch algorithm} Given a rooted tree topology $z$, the Fitch algorithm assigns optimal sequences to internal nodes and computes the parsimony score in linear time. At each node $u$, the algorithm tracks the set of possible characters labeling for node $u$ that can yield a most parsimonious solution for the subtree rooted at $u$.  This character set can be represented by a binary vector $\vf^i_u \in \{0,1\}^{|\Sigma|}$ for site $i$. We label this vector the \textbf{Fitch feature}. As in Felsenstein's algorithm, this vector is a one-hot encoding of the sequences at the leaves and is computed recursively for non-leaves. Specifically, given a rooted tree with root $u$ and two child trees with roots $v$ and $w$, $\vf_u^i$ is calculated as: 
\begin{gather*}
        \vf_u^i=
    \begin{cases}
        \vf_v^i \land \vf_w^i & \text{if}\ \vf^i_v \cdot \vf^i_w \neq 0 \\
        \vf_v^i \lor \vf_w^i & \text{otherwise}
    \end{cases},
\end{gather*}
where $\land$ and $\lor$ are element-wise conjunctions and disjunctions. 
The algorithm traverses the tree two times, first in post-order (bottom-up) to calculate the character set at each node, then in pre-order (top-down) to assign optimal sequences. The total number of character changes between these optimal sequences along the tree's edges is counted as the parsimony score.

\subsection{GFlowNets}

Generative flow networks (GFlowNets) are algorithms for learning generative models of complex distributions given by unnormalized density functions over structured spaces. 
Here, we give a concise summary of the the GFlowNet framework.

\paragraph{Setting} A GFlowNet treats generation of objects $x$ lying in a sample space $\gX$ as a sequential decision-making problem on an acyclic deterministic MDP with set of states $\gS\supset\gX$ and set of actions $\gA\subseteq\gS\times\gS$. The MDP has a designated \emph{initial state} $s_0$, which has no incoming actions, and a set of \emph{terminal states} (those with no outgoing actions) that coincides with $\gX$. Any $x\in\gX$ can be reached from $s_0$ by a sequence of actions $s_0\rightarrow s_1\rightarrow\dots\rightarrow s_n=x$ (with each $(s_i,s_{i+1})\in\gA$). Such sequences are called \emph{complete trajectories}, and the set of all complete trajectories is denoted $\gT$.

A \emph{(forward) policy} $P_F$ is a collection of distributions $P_F(s'\mid s)$ over the children 
of each nonterminal state $s\in\gS\setminus\gX$. A policy induces a distribution over $\gT$:
\vspace*{-2mm}
\[P_F(\tau=(s_0\rightarrow s_1\rightarrow\dots\rightarrow s_n))=\prod_{i=0}^{n-1}P_F(s_{i+1}\mid s_i).
\vspace*{-1mm}
\] 
A policy gives a way to sample objects in $\gX$, by sampling a complete trajectory $\tau\sim P_F$ and returning its final state, inducing a marginal distribution $P_F^\top$ over $\gX$; $P_F^\top(x)$ is the sum of $P_F(\tau)$ over all complete trajectories $\tau$ that end in $x$ (a possibly intractable sum). 

The goal of GFlowNet training is to estimate a parametric policy $P_F(\cdot\mid\cdot;\theta)$ such that the induced $P_F^\top$ is proportional to a given \emph{reward function} $R:\gX\to\R_{>0}$, \ie,
\begin{equation}\label{eq:gfn_goal_background}
    P_F^\top(x)=\frac1ZR(x)\quad\forall x\in\gX,
\end{equation}
where $Z=\sum_{x\in\gX}R(x)$ is the unknown normalization constant (partition function). 


\paragraph{Trajectory balance objective}
The direct optimization of $P_F$'s parameters $\theta$ is impossible since it involves an intractable sum over all complete trajectories. Instead, we leverage the trajectory balance (TB) training objective \citep{malkin2022trajectory}, which introduces two auxiliary objects: an estimate $Z_\theta$ of the partition function and a \emph{backward policy}. In our experiments, we fix the backward policy to uniform, which results in a simplified objective: 
\begin{equation}\label{eq:tb}
    \gL_{\rm TB}(\tau)=\left(\log\frac{Z_{\theta}\prod_{i=0}^{n-1}P_F(s_{i+1}\mid s_i;\theta)}{R(x)P_B(\tau\mid x)}\right)^2,\quad P_B(\tau\mid x):=\prod_{i=1}^n\frac{1}{|{\rm Pa}(s_i)|},
\end{equation}
where ${\rm Pa}(s)$ denotes the set of parents of $s$. By the results of \citet{malkin2022trajectory}, there exists a unique policy $P_F$ and scalar $Z_\theta$ that simultaneously make $\gL_{\rm TB}(\tau)=0$ for all $\tau\in\gT$, and at this optimum, the policy $P_F$ satisfies (\ref{eq:gfn_goal_background}) and $Z_\theta$ equals the true partition function $Z$. In practice, the policy $P_F(\cdot\mid s;\theta)$ is parameterized as a neural network that outputs the logits of actions $s\rightarrow s'$ given a representation of the state $s$ as input, while $Z_\theta$ is parameterized in the log domain. 

$\gL_{\rm TB}(\tau)$ is minimized by gradient descent on trajectories $\tau$ chosen from a behaviour policy that can encourage exploration to accelerate mode discovery (see our behaviour policy choices in \S\ref{sec:arch_and_train}).


\section{Phylogenetic inference with GFlowNets}

\subsection{GFlowNets for Bayesian phylogenetic inference}
\label{sec:gfn_bayesian_phylo}

This section introduces PhyloGFN-Bayesian, our GFlowNet-based method for Bayesian phylogenetic inference. Given a set of observed sequences $\mY$, PhyloGFN-Bayesian learns a sampler over the joint space of tree topologies and edge lengths $\gX = \{(z,b)|L(z) = \mY\}$ such that the sampling probability of $(z,b) \in \gX$ approximates its posterior $P_F^\top(z, b) = P(z,b|\mY)$. 

We follow the same setup as \citet{koptagel2022vaiphy,zhang2023learnable, mimori2023geophy}: (i) uniform prior over tree topologies; (ii) decomposed prior $P(z,b)=P(z)P(b)$; (iii) exponential ($\lambda=10$) prior over branch lengths; (iv) Jukes-Cantor substitution model.

\paragraph{GFlowNet state and action space} The sequential procedure of constructing phylogenetic trees is illustrated in Fig.~\ref{fig:tree_construction}. The initial state $s_0$ is a set of $n$ rooted trees, each containing a single leaf node labeled with an observed sequence.
Each action chooses a pair of trees and joins their roots by a common parent node, thus creating a new tree. 
The number of rooted trees in the set is reduced by 1 at every step, so after $n-1$ steps a single rooted tree with $n$ leaves is left. To obtain an unrooted tree, we simply remove the root node. 

Thus, a state $s$ consists of a set of disjoint rooted trees $s=((z_1, b_1), \dots, (z_l, b_l)), \ \ l\leq n$ and $\bigcup_i L(z_i) = \mY$. Given a nonterminal state with $l>1$ trees, a transition action consists of two steps: (i) choosing a pair of trees to join out of the $l \choose 2$ possible pairs; and (ii) generating branch lengths for the two introduced edges between the new root and its two children. The distribution over the pair of branch lengths is modeled jointly as a discrete distribution with fixed bin size. Following the initial submission of the paper, we conducted additional experiments by employing continuous distribution to model branch lengths. Further details can be found in \S\ref{sec:new_continous_branch} of the appendix.

\begin{figure}
\vspace{-1em}
\begin{center}
\includegraphics[width=\linewidth]{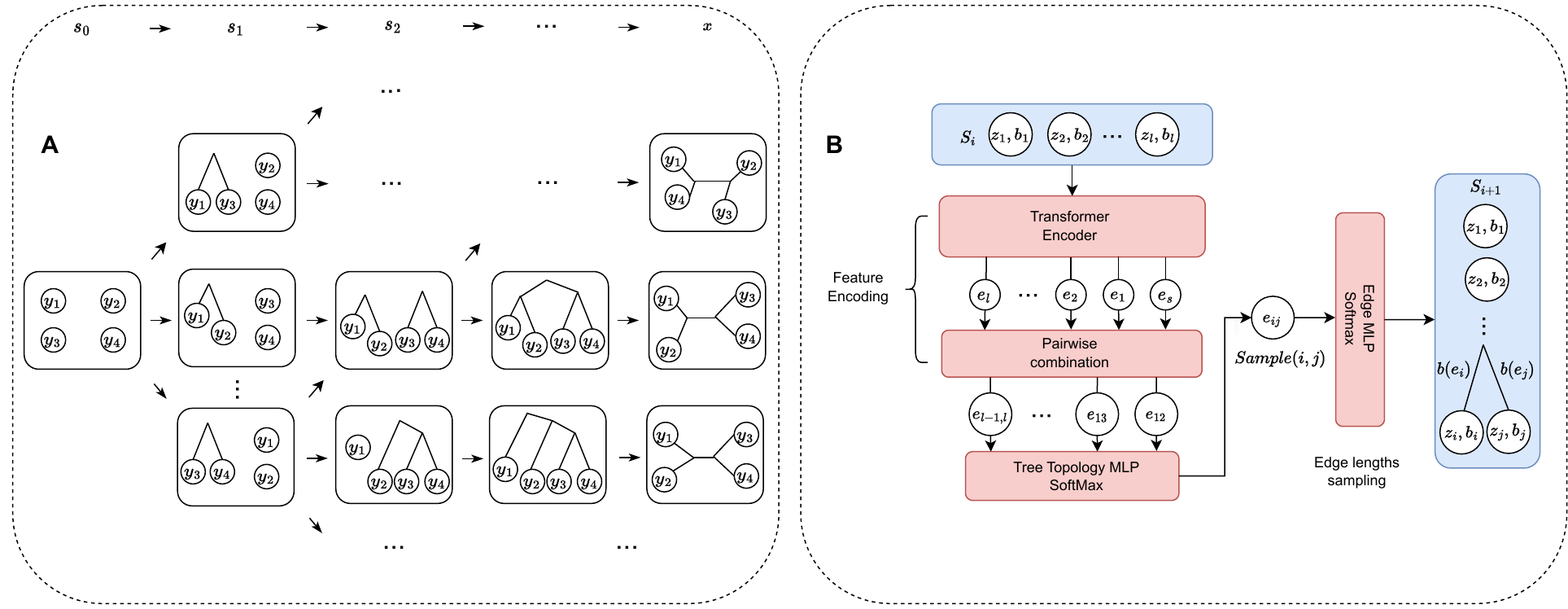}
\caption{\textbf{Left:} PhyloGFN's state space on a four-sequence dataset. Initial state $s_0$ comprises leaf nodes. Successive steps merge pairs of trees until a single unrooted tree remains. \textbf{Right:} Policy model for PhyloGFN-Bayesian. Transformer encoder processes tree-level features $s_i=((z_1,b_1) \dots (z_l,b_l))$. Pairwise features ${e_{ij}}$ are derived and used by MLPs to select tree pairs for merging and sample branch lengths.}
\label{fig:tree_construction}
\end{center}
\vspace{-1.5em}
\end{figure}

\paragraph{Reward function} We define the reward function as the product of the likelihood and the edge length prior: $R(z,b) = P(\mY|z,b)P(b)$, implicitly imposing a uniform prior over tree topologies. By training with this reward, PhyloGFN learns to approximate the posterior, since $P(z,b|\mY) =R(z,b)\frac{P(z)}{P(\mY)}$ and $P(z),P(\mY)$ are both constant.

It is worth emphasizing that in our bottom-up construction of trees, the set of possible actions at the steps that select two trees to join by a new common root is never larger than $n^2$, even though the size of the space of all tree topologies -- all of which can be reached by our sampler -- is superexponential in $n$.
This stands in contrast to the modeling choices of VBPI-GNN \citep{zhang2023learnable}, which constructs trees in a top-down fashion and limits the action space using a pre-generated set of trees, therefore also limiting the set of trees it can sample.
\paragraph{State representation} To represent a rooted tree in a non-terminal state, we compute features for each site independently by taking advantage of the Felsenstein features (\S\ref{sec:background_bayesian_inference}). Let $(z, b)$ be a weighted tree with root $u$ which has two children $v$ and $w$. Let $\vf^i_{u}, \vf^i_v, \vf^i_w \in [0,1]^{|\Sigma|}$ be the Felsenstein feature on nodes $u,v,w$ at site $i$. The representation $\rho^i_u$ for site $i$ is computed as following:
\begin{equation}\label{eq:gfn_goal}
    \rho^i_u = \vf^i_u \prod_{e \in E(z)} P(b_z(e))^{\frac{1}{m}}   
\end{equation}
where $P(b_z(e)) = \prod_{e\in b_z} P(b(e))$ is the edge length prior. 
The tree-level feature is the concatenation of site-level features $\rho = [\rho^1 \dots \rho^m]$. A state $s=(z_1 \dots z_l)$, which is a collection of rooted trees, is represented by the set $\{\rho_1, \dots, \rho_l\}$.

\paragraph{Representation power} Although the proposed feature representation $\rho$ does not capture all the information of tree structure and leaf sequences, we show that $\rho$ indeed contains sufficient information to express the optimal policy. The proposition below shows that given an optimal GFlowNet with uniform $P_B$, two states with identical features set share the same transition probabilities.

\begin{propositionE}[][end,restate]
Let $s_1=\{(z_1, b_1), (z_2, b_2) \dots (z_l,b_l) \}$ and $s_2=\{(z'_1, b'_1), (z'_2, b'_2) \dots (z'_l,b'_l) \}$ be two non-terminal states such that $s_1 \neq s_2$ but sharing the same features $\rho_{i} =\rho'_i$. Let $\va$ be any sequence of actions, which applied to  $s_1$ and $s_2$, respectively, results in full weighted trees $x=(z,b_z), x'=(z',b')$, with two partial trajectories $\tau=(s_1\rightarrow \dots\rightarrow x), \tau'=(s_2\rightarrow\dots\rightarrow x')$. If $P_F$ is the policy of an optimal GFlowNet with uniform $P_B$, then $P_F(\tau) = P_F(\tau')$.
\label{prop:felsenstein_sufficient}
\end{propositionE}
\begin{proofE}
Let $\gG_{s_1}, \gG_{s_2}$ be sub-graphs of the GFlowNet state graph $\gG=(\gS,\gA)$ defined by reachable states from $s_1$ and $s_2$ in $\gG$. Since $s_1$ and $s_2$ have the same number of trees, $\gG_{s_1}$ and $ \gG_{s_2}$ have the graph structure. Let $\gX_1, \gX_2 \subseteq \gX$ be the terminal states reachable from $s_1$ and $s_2$. There is thus a bijective correspondence between $\gX_1$ and $\gX_2$: for every action set $\va$ applying on $s_1$ to obtain $x \in \gX_1$, we obtain $x' \in \gX_2$ by applying $\va$ on $s_2$. Let $\tau$ and $\tau'$ be the partial trajectories created by applying $\va$ on $s_1$ and $s_2$, $P_B(\tau|x) =P_B(\tau'|x')$. Moreover, $R(x) = R(x')$ since $s_1$ and $s_2$ share the same set of features. We have the following expressions for $P_F(\tau)$ and $P_F(\tau')$:
\begin{gather*}
    P_F(\tau) = \frac{R(x)P_B(\tau|x)}{\sum_{\tau_j, x_j} R(x_j)P_B(\tau_j|x_j)},\quad P_F(\tau') = \frac{R(x')P_B(\tau'|x')}{\sum_{\tau'_j, x'_j} R(x'_j)P_B(\tau'_j|x'_j)}.
\end{gather*}
Hence $P_F(\tau) = P_F(\tau')$.

\end{proofE}
All proofs are in Appendix \ref{sec:appendix_proof}. The proposition shows that our proposed features have sufficient representation power for the PhyloGFN-Bayesian policy. Furthermore, Felsenstein features and edge length priors are used in calculating reward by Felsenstein's algorithm. Therefore, computing these features does not introduce any additional variables, and computation overhead is minimized. 

\subsection{GFlowNets for parsimony analysis}

This section introduces PhyloGFN-Parsimony, our GFlowNet-based method for parsimony analysis. We treat large parsimony analysis, a discrete optimization problem, as a sampling problem from the energy distribution $\exp\left(\frac{-M(z|\mY)}{T}\right)$ defined over tree topologies. Here, $M(z|\mY)$ is the parsimony score of $z$ and $T$ is a pre-defined temperature term to control the smoothness of distribution. With sufficiently small $T$, the most parsimonious trees dominate the energy distribution.
To state our goals formally, given observed sequences $\mY$, PhyloGFN-Parsimony learns a sampling policy $P_F$ over the space of tree topologies $\{z|L(z) = \mY \}$ such that $P_F^\top(z) \propto e^{-\frac{M(z|Y)}{T}}$. As $T\to0$, this target distribution approaches a uniform distribution over the set of tree topologies with minimum parsimony scores.



PhyloGFN-Parsimony can be seen as a reduced version of PhyloGFN-Bayesian. The tree shape generation procedure is the same as before, but we no longer generate branch lengths. The reward is defined as $R(z) =\exp\left(\frac{C-M(z|Y)}{T}\right)$, where $C$ is an extra hyperparameter introduced for stability to offset the typically large $M(z|\mY)$ values. Note that $C$ can be absorbed into the partition function and has no influence on the reward distribution. 

\looseness=-1
Similar to PhyloGFN-Bayesian, a state (collection of rooted trees) is represented by the set of tree features, with each tree represented by concatenating its site-level features. With $z$ the rooted tree topology with root $u$, we represent the tree at site $i$ by its root level Fitch feature $\vf_u^i$ defined in \S \ref{sec:background_parsimony}. The proposition below, analogous to Proposition~\ref{prop:felsenstein_sufficient}, shows the representation power of the proposed feature.

\begin{propositionE}[][end,restate]
Let $s_1=\{z_1, z_2, \dots z_l \}$ and $s_2=\{z'_{1}, z'_{2}, \dots z'_{l} \}$ be two non-terminal states such that $s_1 \neq s_2$ but sharing the same Fitch features $\vf_{z_i} = \vf_{z'_i} \ \forall i$. Let $\va$ be any sequence of actions, which, applied to  $s_1$ and $s_2$, respectively, results in tree topologies $x, x' \in \gZ$, with two partial trajectories $\tau=(s_1\rightarrow\dots\rightarrow x), \tau'=(s_2\rightarrow\dots\rightarrow x')$. If $P_F$ is the policy of an optimal GFlowNet with uniform $P_B$, then $P_F(\tau) = P_F(\tau')$
\end{propositionE}
\begin{proofE}
We use the same notation as in the proof of of Proposition~\ref{prop:felsenstein_sufficient}. Since $s_1$ and $s_2$ have the same number of trees, $\gG_{s_1}$ and $\gG_{s_2}$ have the same graph structure. Let $\gX_1, \gX_2 \subseteq \gX$ be the terminal states reachable from $s_1$ and $s_2$. There is a bijective correspondence between $\gX_1$ and $\gX_2$: for every action set $\va$ applying on $s_1$ to obtain $x \in \gX_1$, we obtain $x' \in \gX_2$
by applying $\va$ on $s_2$. Let $\tau$ and $\tau'$ be the partial trajectories created by applying $\va$ on $s_1$ and $s_2$, $P_B(\tau|x) =P_B(\tau'|x')$.

For simplicity, we denote $M(x) = M(x|L(x))$. It is worth to note that for two tree topologies sharing the same Fitch feature, their parsimony scores do not necessarily equal. However, when applying $\va$ on two states $s_1$ and $s_2$ sharing the Fitch feature, the additional parsimony scores introduced are the same:
\begin{gather*}
      M(x) -  \sum_i M(z_i) = M(x') - \sum_i M(z'_i)   
\end{gather*}
We have the following expressions for $P_F(\tau)$ and $P_F(\tau')$:
\begin{gather*}
    P_F(\tau) = \frac{e^\frac{-M(x)}{T}P_B(\tau|x)} {\sum_{\tau_j,x_j}e^\frac{-M(x_j)}{T}P_B(\tau_j|x_j)},\quad
    P_F(\tau') = \frac{e^\frac{-M(x')}{T}P_B(\tau|x')} {\sum_{\tau_j',x_j'}e^\frac{-M(x_j')}{T}P_B(\tau_j'|x_j')}
\end{gather*}
We multiply $\frac{e^\frac{\sum_i M(z_i)}{T}}{e^\frac{\sum_i M(z_i)}{T}}$ by $P_F(\tau)$ and $\frac{e^\frac{\sum_i M(z_i')}{T}}{e^\frac{\sum_i M(z_i')}{T}}$ by $P_F(\tau')$ and obtain:
\begin{gather*}
    P_F(\tau) = \frac{e^\frac{\sum_i M(z_i)  -M(x)}{T}P_B(\tau|x)} {\sum_{\tau_j,x_j}e^\frac{\sum_i M(z_i) -M(x_j)}{T}P_B(\tau_j|x_j)},\quad
    P_F(\tau') = \frac{e^\frac{\sum_i M(z'_i) -M(x')}{T}P_B(\tau|x')} {\sum_{\tau_j',x_j'}e^\frac{\sum_i M(z'_i) -M(x_j')}{T}P_B(\tau_j'|x_j')}
\end{gather*}
Since $e^\frac{\sum_i M(z'_i) -M(x_j')}{T}P_B(\tau_j'|x_j') = e^\frac{\sum_i M(z_i) -M(x_j)}{T}P_B(\tau_j|x_j)$ for all $j$, $P_F(\tau) = P_F(\tau')$.
\end{proofE}
This shows that the Fitch features contain sufficient information for PhyloGFN-Parsimony. Furthermore, the Fitch features are used in the computation of the reward by Fitch's algorithm, so their use in the policy model does not introduce additional variables or extra computation. 

\paragraph{Temperature-conditioned PhyloGFN}
The temperature $T$ controls the trade-off between sample diversity and parsimony scores. Following \citet{robust-scheduling}, we extend PhyloGFN-Parsimony by conditioning the policy on $T$,  with reward $R(z;T)= \exp\left( \frac{C-M(z|\mY)}{T} \right)$, and we learn a sampler such that $P^\top (z;T) \propto R(z; T)$. See Appendix~\ref{sec:training} for more details.

\subsection{Model architecture and training}
\label{sec:arch_and_train}

\paragraph{Parameterization of forward transitions}

We parameterize the forward transitions of tree topology construction using a Transformer-based neural network, whose architecture is shown in Fig.~\ref{fig:tree_construction}. We select Transformer because the input is a set and the model needs to be order-equivariant. For a state consisting of $n$ trees, after $n$ embeddings are generated from the Transformer encoder, $n \choose 2$ pairwise features are created for all possible pairs of trees, and a common MLP generates probability logits for joining every tree pair. PhyloGFN-Bayesian additionally requires generating edge lengths. Once the pair of trees to join is selected, another MLP is applied to the corresponding pair feature to generate probability logits for sampling the edge lengths. See more details in \ref{sec:modeling}.

\paragraph{Off-policy training} The action model $P_F(\cdot\mid s;\theta)$ is trained with the trajectory balance objective. Training data are generated from two sources: (i) A set of trajectories constructed from the currently trained GFlowNet, with actions sampled from the policy with probability $1-\epsilon$ and uniformly at random with probability $\epsilon$.  The $\epsilon$ rate drops from a pre-defined $\epsilon_{start}$ to near 0 during the course of training.
(ii) Trajectories corresponding to the best trees seen to date (replay buffer). Trajectories are sampled backward from these high-reward trees with uniform backward policy. 

\paragraph{Temperature annealing} For PhyloGFN-Parsimony, it is crucial to choose the appropriate temperature $T$. Large $T$ defines a flat target distribution, while small $T$ makes the reward landscape less smooth and leads to training difficulties. We cannot predetermine the ideal choice of $T$ before training, as we do not know \textit{a priori} the parsimony score for the dataset. Therefore, we initialize the training with large $T$ and reduce $T$ periodically during training. See Appendix~\ref{sec:training} for details.

\subsection{Marginal log-likelihood estimation}\label{sec:mll_estimation}

To assess how well the GFlowNet sampler approximates the true posterior distribution, we use the following importance-weighted variational lower bound on the marginal log-likelihood (MLL):
\begin{equation}\label{eq:mll}
    \log P(\mY) \geq \E_{\tau_1,\dots,\tau_k \sim P_F} \log\left( P(z) \frac{1}{K} \sum_{\substack{\tau_i\\\tau_i:s_0 \rightarrow \dots \rightarrow (z_i,b_i) }}^{k} \frac{P_B(\tau_i|z_i,b_i)R(z_i,b_i)}{P_F(\tau_i)} \right).
\end{equation}
Our estimator is computed by sampling a batch of $K$ trajectories and averaging $\frac{P(z)P_B(\tau\mid z,b)R(z,b)}{P_F(\tau)}$ over the batch. This expectation of this estimate is guaranteed to be a lower bound on $\log P(\mY)$ and its bias decreases as $K$ grows~\citep{Burda2016ImportanceWA}.

PhyloGFN-Bayesian models branch lengths with discrete multinomial distributions, while in reality these are continuous variables. To properly estimate the MLL and compare with other methods defined over continuous space, we augment our model to a continuous sampler by performing random perturbations over edges of trees sampled from PhyloGFN-Bayesian. The perturbation follows the uniform distribution $\gU_{[-0.5\omega, 0.5\omega]}$, where $\omega$ is the fixed bin size for edge modeling in PhyloGFN-Bayesian. The resulting model over branch lengths is then a piecewise-constant continuous distribution. We discuss the computation details as well as the derivation of (\ref{eq:mll}) in Appendix \ref{app:mll_estimation}.

\newcommand{\std}[1]{\textcolor{gray}{\scriptsize{$\pm #1$}}} 
\begin{table}[t]
\centering
\vspace{-1em}
\caption{Marginal log-likelihood estimation with different methods on real datasets DS1-DS8. PhyloGFN outperforms $\phi$-CSMC across all datasets and GeoPhy on most. *VBPI-GNN uses predefined tree topologies in training and is not directly comparable.
}
\label{table:mll}
\vspace*{-1em}
\resizebox{1\textwidth}{!}{
\begin{tabular}{@{}lr@{\hspace{0.3\tabcolsep}}lr@{\hspace{0.3\tabcolsep}}lr@{\hspace{0.3\tabcolsep}}lr@{\hspace{0.3\tabcolsep}}lr@{\hspace{0.3\tabcolsep}}l}
\toprule
&\multicolumn{2}{c}{MCMC}&&&\multicolumn{6}{c}{ML-based / amortized, full tree space}\\\cmidrule(lr){2-3}\cmidrule(lr){6-11}
\multicolumn{1}{@{}c@{}}{Dataset} & \multicolumn{2}{c}{MrBayes SS} & \multicolumn{2}{c}{VBPI-GNN*} & \multicolumn{2}{c}{$\phi$-CSMC} & \multicolumn{2}{c}{GeoPhy} & \multicolumn{2}{c}{PhyloGFN} \\ 
\midrule
 DS1 & $ -7108.42 $ & \std{0.18} & $ -7108.41 $ & \std{0.14} & $ -7290.36 $ & \std{7.23} & $ -7111.55 $ & \std{0.07} & $\bf -7108.95 $ & \std{0.06} \\
 DS2 & $ -26367.57 $ & \std{0.48} & $ -26367.73 $ & \std{0.07} & $ -30568.49 $ & \std{31.34} & $\bf -26368.44 $ & \std{0.13} & $\bf -26368.90 $ & \std{0.28}\\ 
 DS3 & $ -33735.44 $ & \std{0.5} & $ -33735.12 $ & \std{0.09} & $ -33798.06 $ & \std{6.62} & $\bf -33735.85 $ & \std{0.12} & $\bf -33735.6 $ & \std{0.35}\\
 DS4 & $ -13330.06 $ & \std{0.54} & $ -13329.94 $ & \std{0.19} & $ -13582.24 $ & \std{35.08} & $ -13337.42 $ & \std{1.32} & $\bf -13331.83 $ & \std{0.19}\\
 DS5 & $ -8214.51 $ & \std{0.28} & $ -8214.64 $ & \std{0.38} & $ -8367.51 $ & \std{8.87} & $ -8233.89 $ & \std{6.63} & $\bf -8215.15 $ & \std{0.2}\\
 DS6 & $ -6724.07 $ & \std{0.86} & $ -6724.37 $ & \std{0.4} & $ -7013.83 $ & \std{16.99} & $ -6733.91 $ & \std{0.57} & $\bf -6730.68 $ & \std{0.54} \\
 DS7 & $ -37332.76 $ & \std{2.42} & $ -37332.04 $ & \std{0.12} &  &  & $\bf -37350.77 $ & \std{11.74} 
 & $ -37359.96 $ & \std{1.14}\\
 DS8 & $ -8649.88 $ & \std{1.75} & $ -8650.65 $ & \std{0.45} & $ -9209.18 $ & \std{18.03} & $ -8660.48 $ & \std{0.78} & $\bf -8654.76 $ & \std{0.19} \\
\bottomrule
\end{tabular}
}
\vspace{-1em}
\end{table}

\section{Experiments}

We evaluate PhyloGFN on a suite of 8 real-world benchmark datasets (Table~\ref{table:dataset_info} in Appendix~\ref{sec:data_info}) that is standard in the literature. These datasets feature pre-aligned sequences and vary in difficulty (27 to 64 sequences; 378 to 2520 sites). In the following sections we present our results and analysis on Bayesian and parsimony-based phylogenetic inference.

\begin{wrapfigure}{R}{0.45\textwidth}
\begin{center}
\vspace{-3em}
\includegraphics[width=0.45\textwidth]{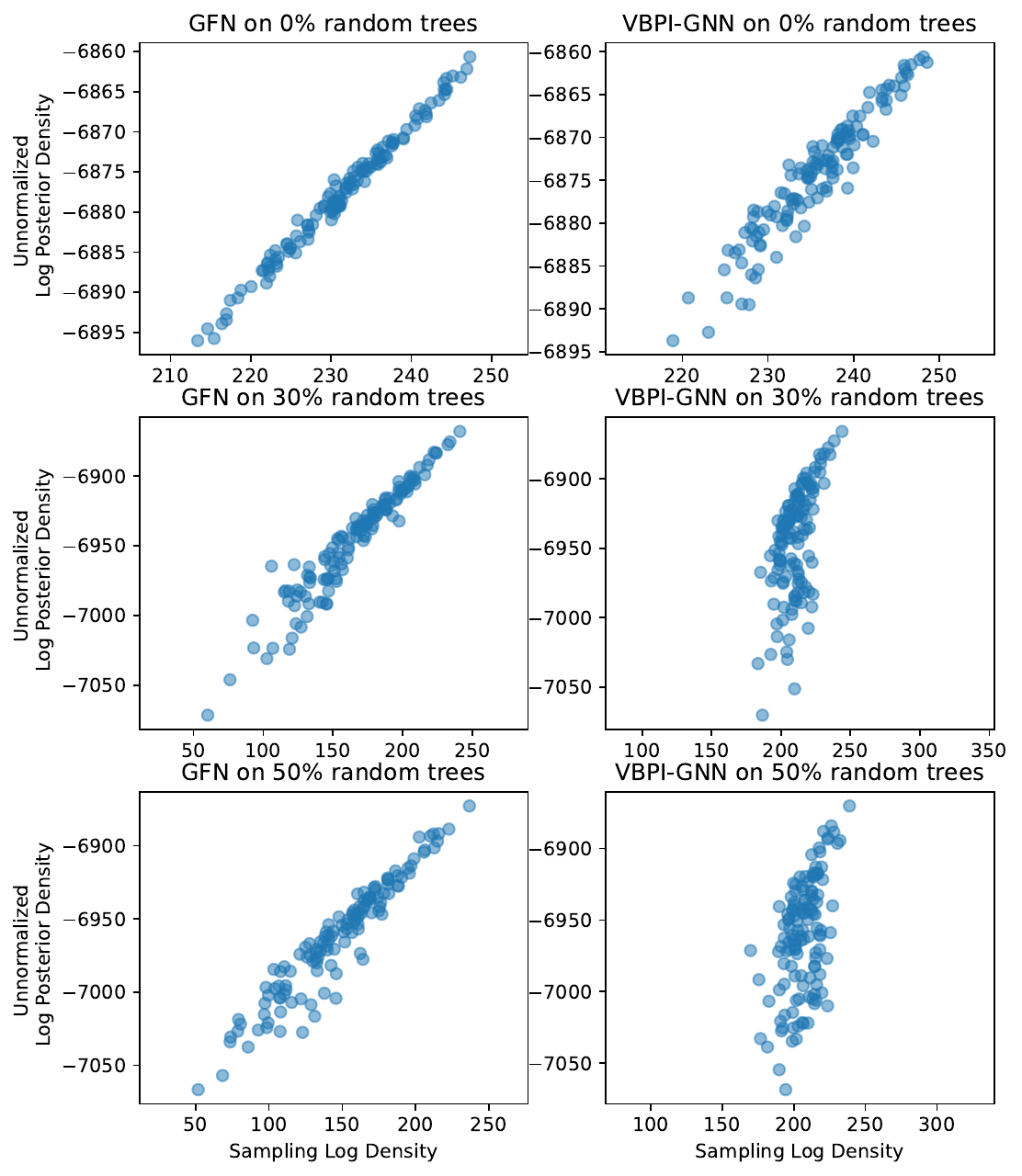}\vspace{-0.5em}
\end{center}
\caption{Model sampling log-density vs. unnormalized posterior log-density for high/medium/low-probability trees on DS1. We highlight that PhyloGFN-Bayesian performs significantly better on medium- and low-probability trees, highlighting its superiority in modeling the entire data space. 
}
\vspace{-0.5em}
\label{fig:pearsonr_g}
\end{wrapfigure}

\subsection{Bayesian phylogenetic inference}

PhyloGFN is compared with a variety of baselines in terms of sampling-based estimates of marginal log-likelihood (MLL; see details in \S\ref{sec:mll_estimation}). 
The baselines we compare to are MrBayes SS, Stepping-Stone sampling algorithm implemented in MrBayes~\citep{ronquist2012mrbayes}, and three variational inference methods: VBPI-GNN~\citep{zhang2023learnable},  $\phi$-CSMC proposed in VaiPhy~\citep{koptagel2022vaiphy}, and GeoPhy~\citep{mimori2023geophy}. The sampling setup for MrBayes follows~\citet{zhang2018variational} and otherwise show the highest MLL reported in the respective papers. PhyloGFN MLL is estimated following the formulation in \S\ref{sec:mll_estimation}, with mean and standard deviation obtained from 10 repetitions, each using 1000 samples. See Sections~\ref{sec:training} and ~\ref{sec:running_time} for additional training details, hardware specifications, and running time comparison. Additional results from two repeated experiments are in Table \ref{tab:repeated_runs}.

The results are summarized in Table~\ref{table:mll}. Our PhyloGFN is markedly superior to $\phi$-CSMC across all datasets and outperforms GeoPhy on most, with the exception of DS2 and DS3 where the two perform similarly, and DS7 where GeoPhy obtains a better result. VBPI-GNN, is the only machine learning-based method that performs on par against MrBayes, the current gold standard in Bayesian phylogenetic inference. However, it should be emphasized that VBPI-GNN requires a set of pre-defined tree topologies that are likely to achieve high likelihood, and as a consequence, its training and inference are both constrained to a small space of tree topologies.
On the other hand, PhyloGFN operates on the full space of tree topologies and, in fact, achieves a closer fit to the true posterior distribution. 
To show this, for each dataset, we created three sets of phylogenetic trees with high/medium/low posterior probabilities and obtained the corresponding sampling probabilities from PhyloGFN and VBPI-GNN. The three classes of trees are generated from VBPI-GNN by randomly inserting uniformly sampled actions into its sequential tree topology construction process with $0\%$, $30\%$, or $50\%$ probability, respectively, which circumvents VBPI-GNN's limitation of being confined to a small tree topology space. Table~\ref{table:pearsonr} and Fig.~\ref{fig:pearsonr_g} show that PhyloGFN achieves higher Pearson correlation between the sampling log-probability and the unnormalized ground truth posterior log-density for the majority of datasets and classes of trees. In particular, while VBPI-GNN performs better on high-probability trees, its correlation drops significantly on lower-probability trees. On the other hand, PhyloGFN maintains a high correlation for all three classes of trees across all datasets, the only exception being the high-probability trees in DS7. See Appendix \ref{sec:sampling_vs_posterior} for details and extended results and Appendix \ref{significance_suboptimal_trees} for a short explanation of the significance of modeling suboptimal trees.

\paragraph{Continuous branch length modeling} Following the initial submission, additional experiments involving continuous branch length modeling demonstrate PhyloGFN's ability to achieve state-of-the-art Bayesian inference performance. For more details, please see Appendix~\ref{sec:new_continous_branch}.

\begin{table}[t]
\centering
\vspace*{-1em}
\caption{Pearson correlation of model sampling log-density and ground truth unnormalized posterior log-density for each dataset on high/medium/low posterior density trees generated by VBPI-GNN. PhyloGFN achieves a good fit on both high and low posterior density regions.
} \vspace*{-1em}
\label{table:pearsonr}
\begin{tabular}{ @{}lcccccc  }
\toprule
&\multicolumn{2}{c }{No random}&\multicolumn{2}{c}{30\% random}&\multicolumn{2}{c}{50\% random} \\\cmidrule(lr){2-3}\cmidrule(lr){4-5}\cmidrule(lr){6-7}
{Dataset}&PhyloGFN&VBPI-GNN&PhyloGFN&VBPI-GNN&PhyloGFN&VBPI-GNN\\
\midrule
DS1 &\bf 0.994 &    0.955 &\bf 0.961 &    0.589 &\bf 0.955 &    0.512 \\
DS2 &  0.930 &\bf 0.952 &\bf 0.948 &    0.580 &\bf 0.919 &    0.538  \\
DS3 &    0.917 &\bf 0.968 &\bf 0.963 &    0.543 &\bf 0.950 &    0.499   \\
DS4 &    0.942 &\bf 0.960 &\bf 0.945 &    0.770 &\bf 0.966 &    0.76  \\
DS5 &\bf 0.969 &    0.965 &\bf 0.937 &    0.786 &\bf 0.939 &    0.773  \\
DS6 &    \bf 0.993  & 0.887 &\bf 0.973 &    0.816 &\bf 0.934 &    0.702  \\
DS7 &    0.624 &\bf 0.955 &\bf 0.787 &    0.682 &\bf 0.764 &    0.678  \\
DS8 &\bf 0.978 &    0.955 &\bf 0.913 &    0.604 &\bf 0.901 &    0.463  \\
\bottomrule
\end{tabular}
\vspace{-1em}
\end{table}

\subsection{Parsimony-based phylogenetic inference}

\begin{wrapfigure}{R}{0.45\textwidth}
\vspace{-5mm}
\centering
\includegraphics[width=0.45\textwidth]{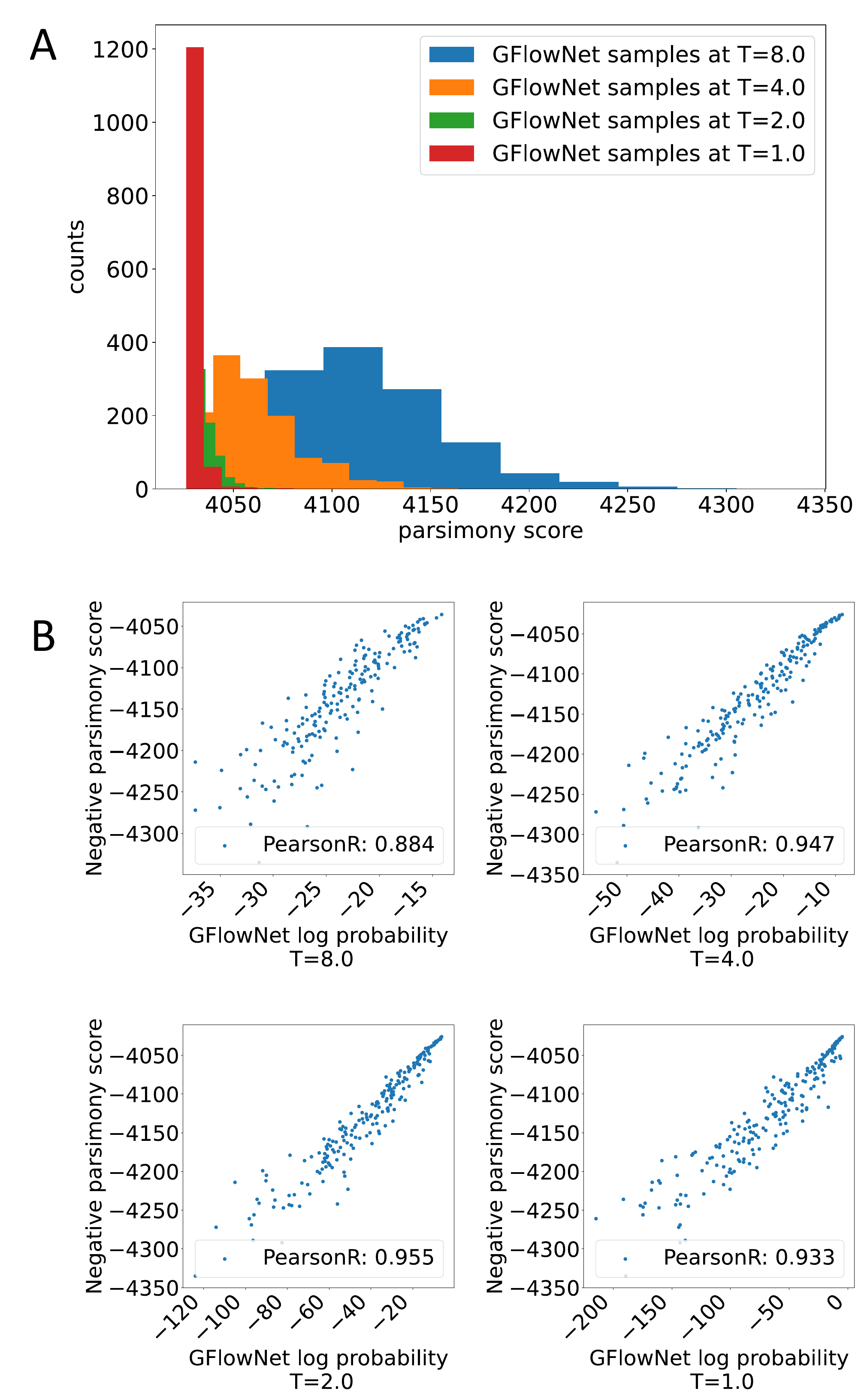}
\caption{A temperature-conditioned PhyloGFN is trained on DS1 using temperatures sampled between 4.0 and 1.0. (A) Parsimony score distribution with varying statistical temperature input (8.0 to 1.0) to PhyloGFN policy (10,000 trees sampled per temperature). (B) PhyloGFN achieves high Pearson correlation at each temperature.}
\label{fig:ds1_parsimony_conditional}
\vspace{-2em}
\end{wrapfigure}

As a special case of Bayesian phylogenetic inference, the parsimony problem is concerned with finding the most-parsimonious trees -- a task which is also amenable to PhyloGFN. Here, we compare to the state-of-the-art parsimony analysis software PAUP* \citep{swofford1998phylogenetic}. For all datasets, our PhyloGFN and PAUP* are able to identify the same set of optimal solutions to the Large Parsimony problem, ranging from a single optimal tree for DS1 to 21 optimal trees for DS8. 

Although the results are similar between PhyloGFN and PAUP*, once again we emphasize that PhyloGFN is based on a rigorous mathematical framework of fitting and sampling from well-defined posterior distributions over tree topologies, whereas PAUP* relies on heuristic algorithms. To put it more concretely, we show in Fig.~\ref{fig:parsimony_boxplots} that PhyloGFN is able to (i) learn a smooth echelon of sampling probabilities that distinguish the optimal trees from suboptimal ones; (ii) learn similar sampling probabilities for trees within each group of equally-parsimonious trees; and (iii) fit all $2n-3$ rooted trees that belong to the same unrooted tree equally well.

Finally, Fig.~\ref{fig:ds1_parsimony_conditional} shows that a single temperature-conditioned PhyloGFN can sample phylogenetic trees of varying ranges of parsimony scores by providing suitable temperatures $T$ as input to the model. Also, PhyloGFN is able to sample proportionately from the Boltzmann distribution defined at different temperatures and achieves high correlation between sampling log-probability and log-reward. Although the temperature-conditioned PhyloGFN has only been trained on a small range of temperatures between 4.0 and 1.0, Fig.~\ref{fig:ds1_parsimony_conditional} shows it can also approximate the distribution defined at temperature 8.0. Further results are presented in Appendix~\ref{sec:parsimony_results}.


\section{Discussion and future work}

\looseness=-1
In this paper, we propose PhyloGFN, a GFlowNet-based generative modeling algorithm, to solve parsimony-based and Bayesian phylogenetic inference. We design an intuitive yet effective tree construction procedure to efficiently model the entire tree topology space. We propose a 
novel tree representation based on Fitch's and Felsenstein's algorithms to represent rooted trees without introducing additional learnable parameters, and we show the sufficiency of our features for the purpose of phylogenetic tree generation. We apply our algorithm on eight real datasets, demonstrating that PhyloGFN is competitive with or superior to prior works in terms of marginal likelihood estimation, while achieving a closer fit to the target distribution compared to state-of-the-art variational inference methods. 
While our initial experiments with continuous branch length sampling have demonstrated notable performance enhancements, there remains a need for future research to address training efficiency. 
In addition, we plan to explore the use of conditional GFlowNets to amortize the dependence on the sequence dataset itself. This would allow a single trained GFlowNet to sample phylogenetic trees for sets of sequences that were not seen during training. 

\section*{Reproducibility}

Code and data are available \url{https://github.com/zmy1116/phylogfn}. 
See Appendix \ref{sec:reproducibility} for a detailed description.

\section*{Acknowledgments}

We thank Oskar Kviman and Jens Lagergren for the insightful discussions on GFlowNet and phylogenetic inference. 
We thank Tianyu Xie and Cheng Zhang for discussions on continuous branch lengths modeling. We acknowledge the support of Mingyang Zhou by the NSERC Discovery grant awarded to Mathieu Blanchette, and the funding from CIFAR, NSERC, IBM, Intel, Genentech and Samsung to Yoshua Bengio. Finally, we thank the Digital Alliance of Canada for providing the computational resources.

\bibliography{iclr2024_conference}
\bibliographystyle{iclr2024_conference}

\clearpage
\appendix
\beginsupplement

\section{Proofs of propositions 1 and 2} \label{sec:appendix_proof}


Before proving proposition \ref{prop:felsenstein_sufficient}, we first prove the following two lemmas. First, we show that for two states sharing the same tree features, applying the same action to the states results in two new states still sharing the same features. 
\begin{lemma} \label{lem: same_feature}
Let $s_1=\{(z_1, b_{1}), (z_2, b_{2}) \dots (z_l,b_{l}) \}$ and $s_2=\{(z'_1, b'_{1}), (z'_2, b’_{2}) \dots (z'_l,b‘_{l}) \}$ be two non-terminating states sharing the same features $\rho_{i} =\rho'_i$. Let
a be the action that joins the trees with indices $(v,w)$ to form a new tree indexed $u$ with edge lengths $(b(e_{uv}), b(e_{uw}))$.  By applying $a$ on $s_1$, we join $(z_v, b_{v})$ and $(z_w, b_{w})$ to form new tree $(z_u, b_u)$.  By applying $a$ on $s_2$, we join $(z'_v, b'_{v})$ and $(z'_w, b'_{w})$ to form new tree $(z'_u, b'_u)$. Then the new trees' features are equal: $\rho_u = \rho_u'$.
\end{lemma}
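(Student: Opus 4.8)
The plan is to unwind the definition of the site-level feature $\rho^i_u$ from Eq.~\eqref{eq:gfn_goal} and verify that every ingredient entering it is determined solely by the data that $s_1$ and $s_2$ share, namely the Felsenstein features and the edge-length priors. Concretely, $\rho^i_u = \vf^i_u \prod_{e\in E(z_u)} P(b_{z_u}(e))^{1/m}$, so I need to show two things: first, that the new Felsenstein feature $\vf^i_u$ is equal for the two new trees; and second, that the accumulated edge-prior factor $\prod_{e\in E(z_u)} P(b_{z_u}(e))^{1/m}$ is equal for the two new trees.

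For the Felsenstein feature, I would invoke Eq.~\eqref{eq:felsenstein}: $\vf^i_u[c] = P(L^i_u\mid a^i_u=c)$ is computed from $\vf^i_v$, $\vf^i_w$, and the transition probabilities $P(\cdot\mid\cdot, b(e_{uv}))$, $P(\cdot\mid\cdot, b(e_{uw}))$. By hypothesis $\rho_i=\rho'_i$ for all $i$; since the edge-prior factors multiplying $\vf^i_v$ are strictly positive scalars, equality of $\rho^i_v$ and $\rho'^i_v$ does not immediately give equality of $\vf^i_v$ and $\vf'^i_v$ — this is the one subtlety. I would resolve it by noting that the scalar factor $\prod_{e\in E(z_v)}P(b_{z_v}(e))^{1/m}$ is a \emph{global} multiplier shared across all sites $i$ and all coordinates $c$, so $\rho_v=\rho'_v$ forces $\vf^i_v = \kappa_v\,\vf'^i_v$ for a single constant $\kappa_v>0$ independent of $i$ (and likewise $\kappa_w$ for $w$). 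Since the action $a$ applies the identical edge lengths $(b(e_{uv}),b(e_{uw}))$ in both cases, Eq.~\eqref{eq:felsenstein} is bilinear in $(\vf^i_v,\vf^i_w)$, giving $\vf^i_u = \kappa_v\kappa_w\,\vf'^i_u$ for all $i$. For the edge-prior factor: $E(z_u) = E(z_v)\sqcup E(z_w)\sqcup\{e_{uv},e_{uw}\}$, and the two new edges carry identical lengths in both constructions, so $\prod_{e\in E(z_u)}P(b_{z_u}(e))^{1/m} = \big(\prod_{e\in E(z_v)}P(b_{z_v}(e))^{1/m}\big)\big(\prod_{e\in E(z_w)}P(b_{z_w}(e))^{1/m}\big)\,P(b(e_{uv}))^{1/m}P(b(e_{uw}))^{1/m}$, and the corresponding factor for $z'_u$ differs exactly by $\kappa_v^{-1}\kappa_w^{-1}$ (since $\rho_v=\rho'_v$ means $\vf^i_v=\kappa_v\vf'^i_v$ is compensated by the prior factor being $\kappa_v^{-1}$ times the primed one). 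Multiplying, the $\kappa_v\kappa_w$ from the Felsenstein feature cancels the $\kappa_v^{-1}\kappa_w^{-1}$ from the prior factor, yielding $\rho^i_u = \rho'^i_u$.

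Having handled sites and the joined pair, I would observe that the features of all trees in $s_1$ and $s_2$ other than $z_v,z_w$ (resp.\ $z'_v,z'_w$) are untouched by the action and remain equal by hypothesis, so the full new states again share all their features, which is what the lemma asserts.

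The main obstacle, as flagged above, is that Felsenstein features are only defined up to site-independent positive scaling once folded into $\rho$, so the proof must carefully track that the scaling constants $\kappa_v,\kappa_w$ are uniform across sites and that they cancel exactly between the $\vf^i_u$ term and the edge-prior product in $\rho^i_u$. Everything else — the disjoint decomposition of edge sets, the bilinearity of the Felsenstein recursion, and the fact that the action supplies identical edge lengths on both sides — is routine bookkeeping.
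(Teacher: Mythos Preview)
Your proof is correct, but it takes a more circuitous route than the paper's. The paper observes directly that $\rho^i_u$ can be computed as a function of $\rho^i_v$, $\rho^i_w$, and the new edge lengths alone, via the single formula
\[
\rho_u^i[j] = P(b(e_{uv}))^{1/m}\, P(b(e_{uw}))^{1/m} \Bigl(\sum_{k} P(a^i_v{=}k\mid a^i_u{=}j, b(e_{uv}))\,\rho_v^i[k]\Bigr)\Bigl(\sum_{k} P(a^i_w{=}k\mid a^i_u{=}j, b(e_{uw}))\,\rho_w^i[k]\Bigr),
\]
which follows by substituting $\rho^i_v = \alpha_v \vf^i_v$ (with $\alpha_v$ the edge-prior factor) into the Felsenstein recursion and regrouping. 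Since the right-hand side depends only on $\rho^i_v,\rho^i_w$ and the action data, the conclusion $\rho_u=\rho'_u$ is immediate. Your approach instead decomposes $\rho = \vf\cdot\alpha$, tracks the two pieces separately through the recursion via the scaling constants $\kappa_v,\kappa_w$, and verifies they cancel in the product. The ``subtlety'' you flag---that $\rho_v=\rho'_v$ does not force $\vf_v=\vf'_v$---is real, but it is an artifact of the decomposition you chose; the paper's argument never separates the factors and so never sees it. Both arguments are sound; the paper's is shorter because it recognizes that $\rho$ itself obeys a closed recursion.
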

\begin{proof}
We show that $\rho_u$ can be calculated from $\rho_v$ and $\rho_w$:
\begin{gather*}
    \rho_u^i[j] = P(b(e_{uv}))^{\frac{1}{m}} \times P(b(e_{uw}))^{\frac{1}{m}}  \times \sum_{k \in \Sigma} P(a^i_u=j |a^i_v=k, b_z(e_{uv}))\rho_v^i[k] \\ \times \sum_{k} P(a^i_u=j |a^i_w=k, b(e_{uw}))\rho^i_w[k]
\end{gather*}
since $\rho_v = \rho_v'$, $\rho_w = \rho_w'$ and $(b(e_{uv}), b(e_{uw}))$ are new branch lengths for both two trees. Therefore $\rho_u= \rho_u'$
\end{proof} 

Next, we show that for two states sharing the same tree features, applying the same action sequences results in two phylogenetic trees with the same reward.
\begin{lemma}\label{lem:same_reward}
Let $s_1=\{(z_1, b_{1}), (z_2, b_{2}) \dots (z_l,b_{l}) \}$ and $s_2=\{(z'_1, b'_{1}), (z'_2, b'_{2}) \dots (z'_l,b'_{l}) \}$ be two non-terminating states sharing the same features $\rho_{i} =\rho'_i$. Let $\va$ be any sequence of actions to apply on  $s_1$ and $s_2$ to form full trees $x=(z,b_z), x'=(z',b_z')$, $R(z,b)=R(z',b')$. 
\end{lemma}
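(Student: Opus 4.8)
The plan is to prove Lemma~\ref{lem:same_reward} by induction on the length of the action sequence $\va$, using Lemma~\ref{lem: same_feature} to maintain the invariant that the intermediate states always share the same features, and then reading off the reward from the root-level feature at the end.

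First I would recall what the reward actually depends on. For PhyloGFN-Bayesian, $R(z,b) = P(\mY \mid z,b) P(b)$. The edge length prior $P(b) = \prod_{e \in E(z)} P(b(e))$ depends only on the multiset of branch lengths generated along the trajectory; since $\va$ prescribes exactly the same branch lengths when applied to $s_1$ and to $s_2$, and since the branch lengths already present in $s_1$ and $s_2$ are encoded in their (equal) features via the factor $\prod_{e} P(b_z(e))^{1/m}$ built into each $\rho^i_u$, the prior contributions match. The likelihood $P(\mY \mid z,b) = \prod_i P(\mY^i \mid z,b) = \prod_i \sum_{c \in \Sigma} P(a_{\mathrm{root}} = c)\, \vf^i_{\mathrm{root}}[c]$ depends on $(z,b)$ only through the Felsenstein feature $\vf^i_{\mathrm{root}}$ at the final root, for each site $i$. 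So the entire reward is a fixed function of the final tree's feature $\rho_{\mathrm{root}} = [\rho^1_{\mathrm{root}} \dots \rho^m_{\mathrm{root}}]$ (the per-site normalizing factors $\prod P(b(e))^{1/m}$ raised to the $m$-th power recover $P(b)$, and the $\vf^i$ components recover the likelihood).

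Then the core argument: apply Lemma~\ref{lem: same_feature} repeatedly. Let $\va = (a_1, \dots, a_k)$. Write $s_1 = s_1^{(0)} \to s_1^{(1)} \to \dots \to s_1^{(k)} = x$ and $s_2 = s_2^{(0)} \to \dots \to s_2^{(k)} = x'$ for the two partial trajectories. I claim $s_1^{(t)}$ and $s_2^{(t)}$ share the same feature set for every $t$; this holds at $t=0$ by hypothesis, and the inductive step is exactly Lemma~\ref{lem: same_feature} (the unchanged trees keep their features trivially, and the newly merged tree gets matching features by the lemma). At $t = k$ both $x$ and $x'$ are complete trees consisting of a single rooted tree, so they have a single feature $\rho_{\mathrm{root}} = \rho'_{\mathrm{root}}$. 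Since $R$ is the fixed function of this root feature described above, $R(z,b_z) = R(z', b'_z)$.

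I do not anticipate a serious obstacle here — the lemma is essentially a bookkeeping consequence of Lemma~\ref{lem: same_feature} plus the observation that the reward factors through the root Felsenstein feature and the branch-length prior. The one point requiring a little care is making explicit that $P(b)$ is recoverable from the feature: each $\rho^i_u$ carries the factor $\prod_{e \in E(z)} P(b_z(e))^{1/m}$, so the product of these factors across the $m$ sites is exactly $P(b)$, and this is the same for $x$ and $x'$ because the features agree and the actions in $\va$ contribute identical new edges. I would state this identification once, cleanly, and then the inductive argument closes the proof.
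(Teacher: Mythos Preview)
Your proposal is correct and follows essentially the same approach as the paper: first observe that the reward $R(z,b)=P(\mY\mid z,b)P(b)$ is a fixed function of the root-level feature $\rho_{\mathrm{root}}$ (the paper writes this compactly as $\prod_i P(a_{2n-1})\cdot\rho_u^i = P(b)P(\mY\mid z,b)$), then apply Lemma~\ref{lem: same_feature} inductively along $\va$ to conclude the final root features agree, hence the rewards agree. Your extra care in separating out how $P(b)$ and the likelihood are each recoverable from $\rho_{\mathrm{root}}$ is just an unpacking of that same one-line computation.
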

\begin{proof}
    Let $\rho_{u}$ denote the tree feature for $(z,b)$,  $\rho^i_{u} = \vf^i_{u} \prod_e P(b(e))$. We first show that the reward can be directly calculated from the root feature $\rho_{u}$:
    \begin{align*}
    \prod_i P(a_{2n-1}) \cdot \rho_u &= \prod_e P(b(e)) \prod_i P(a_{2n-1})\cdot f^i_u \\
    &= P(b)P(\mY|z,b) \\
    &= R(z,b),
\end{align*}
where $P(a_{2n-1})$ is the constant root character assignment probability. As $\va$ is applied to $s_1$ and $s_2$ in a sequential manner, at every step we obtain two state swith the same tree features (by Lemma~\ref{lem: same_feature}), until, at the final state, $\rho_u = \rho_u'$. It follows that $R(z,b) = R(z',b')$.
\end{proof}

We are now ready to prove the propositions.

\printProofs

\section{Marginal log-likelihood estimation}
\label{app:mll_estimation}

\paragraph{Estimating the sampling likelihood of a terminal state} In the PhyloGFN state space (in both the Bayesian and parsimony-based settings), there exist multiple trajectories leading to the same terminal state $x$, hence the sampling probability of $x$ is calculated as: $P^\top(x) = \sum_{\tau:s_0\rightarrow \dots \rightarrow x} P_F(\tau)$. This sum is intractable for large-scale problems. However, we can estimate the sum using importance sampling\citep{zhang2022generative}:
\begin{equation}
    P_F^\top(x) \approx \frac{1}{K} \sum_{\tau_i:s_0\rightarrow \dots \rightarrow x} \frac{P_F(\tau_i)}{P_B(\tau_i|x)},
    \label{eq:terminal_var_bound}
\end{equation}
where the trajectories $\tau_i$ are sampled from $P_B(\tau_i\mid x)$. The logarithm of the right side of (\ref{eq:terminal_var_bound}) is, in expectation, a lower bound on the logarithm of the true sampling likelihood on the left side of (\ref{eq:terminal_var_bound}).

\paragraph{Estimating the MLL} The lower bound on MLL can be evaluated using the importance-weighted bound $\log P(\mY) \geq \E_{x_1 \dots x_k \sim P_F} \log \frac{1}{K} \sum \frac{P(x,\mY)}{P^T(x)}$\citep{Burda2016ImportanceWA}. 
However, we cannot use it for PhyloGFN since we cannot compute the exact $P^T(x)$, only get a lower bound on it using (\ref{eq:terminal_var_bound}). Therefore, we propose the following variational lower bound:
\begin{gather*}
    \log P(\mY) \geq \E_{\tau_1,\dots,\tau_k \sim P_F} \log\left( P(z) \frac{1}{K} \sum_{\substack{\tau_i\\\tau_i:s_0 \rightarrow \dots \rightarrow (z_i,b_i) }}^{k} \frac{P_B(\tau_i|z_i,b_i)R(z_i,b_i)}{P_F(\tau_i)} \right)
\end{gather*}
We show the derivation of the estimator and thus prove its consistency:
\begin{align*}
    P(\mY) &= \int_{z,b} P(\mY|z,b)P(b|z)P(z) \\
         &= \int_{z,b} R(z,b)P(z) \\
         &= \int_{z,b} R(z,b)P(z) \sum_{\tau: s_0 \dots x=(z,b) } P_B(\tau|z,b) \\
         &= \int_{z,b} R(z,b)P(z) \sum_{\tau: s_0 \dots x=(z,b) } P_B(\tau|z,b) \frac{P_F(\tau)}{P_F(\tau)} \\
         &= \int_{z,b} \sum_{\tau: s_0 \dots x=(z,b) } P_F(\tau) \frac{P_B(\tau|z,b)R(z,b)P(z)}{P_F(\tau)} \\
         &= \int_{\tau: s_0 \dots x_\tau=(z_{\tau},b_{\tau})} P_F(\tau) \frac{P_B(\tau|z_{\tau},b_{\tau})R(z_{\tau},b_{\tau})P(z)}{P_F(\tau)} \\ 
         &= P(z) \E_{\tau \sim P_F} \frac{P_B(\tau|z_{\tau},b_{\tau})R(z_{\tau},b_{\tau})}{P_F(\tau)} \\ 
         &\approx P(z) \frac{1}{K} \sum_{\substack{\tau_i \sim P_F\\\tau_i:s_0 \dots (z_i,b_i)}}^{k} \frac{P_B(\tau_i|z_i,b_i)R(z_i,b_i)}{P_F(\tau_i)}.
\end{align*}
One can show, in a manner identical to the standard importance-weighted bound, that this estimate is a lower bound on $\log P(\mY)$.

PhyloGFN-Bayesian models edge lengths using discrete distributions. To estimate our algorithm's MLL, we augment the sampler to a continuous sampler by modeling branch lengths with a piecewise-constant continuous distribution based on the fixed-bin multinomial distribution of PhyloGFN. We can still use the above formulation to estimate the lower bound. However, each trajectory now has one extra step: $\tau' = s_0 \rightarrow \dots \rightarrow (z,b) \rightarrow (z,b_i)$ where $(z,b_i)$ is obtained from $z,b$ by randomly perturbing each branch length by adding noise from $U[-0.5\omega, 0.5\omega]$, where $\omega$ is the bin size used for PhyloGFN. Let $\tau = s_0 \rightarrow \dots \rightarrow (z,b)$ be the original trajectory in the discrete PhyloGFN, we can compute $P_F(\tau'), P_B(\tau')$ from $P_F(\tau), P_B(\tau)$:
\begin{gather*}
    P_F(\tau') = P_F(\tau) \frac{1}{\omega^{|E(z)|}}, \ \ \ P_B(\tau') = P_B(\tau)
\end{gather*}

The term $\frac{1}{\omega^{|E(z)|}}$ is introduced in $P_F$ because we additionally sample over a uniform range $\omega$ for all $|E(z)|$ edges. The backward probability $P_B$ stays unchanged because given $(z,b)$ generated from the discrete GFN, for any $(z,b')$ resulting from perturbing edges, $(z,b)$ is the the only possible ancestral state. 


\clearpage
\section{Dataset information}
\label{sec:data_info}

\begin{table}[ht]
\centering
\small
\caption{Statistics of the benchmark datasets from DS1 to DS8.} \vspace*{-1em}
\label{table:dataset_info}
\begin{tabular}{lllllllll}
\toprule
Dataset & \# Species & \# Sites & Reference \\ 
\midrule
DS1 & 27 & 1949 & \citet{ds1_reference} \\
DS2 & 29 & 2520 & \citet{ds2_reference}  \\
DS3 & 36 & 1812 & \citet{ds3_reference} \\
DS4 & 41 & 1137 & \citet{ds4_reference} \\
DS5 & 50 & 378 &  \citet{lakner2008efficiency} \\
DS6 & 50 & 1133 & \citet{ds6_reference} \\
DS7 & 59 & 1824 & \citet{ds7_reference} \\
DS8 & 64 & 1008 &  \citet{ds8_reference} \\
\bottomrule
\end{tabular}
\end{table} 
\section{Modeling}
\label{sec:modeling}

Given the character set $\Sigma$, we use one-hot encoding to represent each site in a sequence. To deal with wild characters in the dataset, for parsimony analysis we consider them as one special character in $\Sigma$, while in Bayesian inference, we represent them by a vector of $\mathds{1}$. 

For both PhyloGFN-Parsimony and PhyloGFN-Bayesian, given a state with $l$ rooted trees, its representation feature is the set $\{\rho_1 \dots \rho_l\}$, where $\rho$ is a vector of length $m|\Sigma|$. For example, for DNA sequences of 1000 sites, each $\rho$ would have length 4000. Therefore, before passing these features to the Transformer encoder, we first use a linear transformation to obtain lower-dimensional embeddings of the input features.

We use the Transformer architecture~\citep{vaswani17attn} to build the \textbf{Transformer encoder}. For a state with $l$ trees, the output is a set of $l+1$ features $\{e_s, e_1, \dots, e_l\}$ where $e_s$ denotes the summary feature (\ie, the [CLS] token of the Transformer encoder input). 

To select pairs of trees to join, we evaluate tree-pair features for every pair of trees in the state and pass these tree-pair features as input to the \textbf{tree MLP} to generate probability logits for all pairs of trees. The tree-pair feature for a tree pair $(i,j)$ with representations $e_i$, $s_j$ is the concatenation of $e_i+e_j$ with the summary embedding of the state, \ie, the feature is $[e_s;e_i+e_j]$, where $[\cdot;\cdot]$ denotes vector direct sum (concatenation). For a state with $l$ trees, $\binom l2=\frac{l(l-1)}{2}$ such pairwise features are generated for all possible pairs. 

To generate edge lengths for the joined tree pair $(i,j)$, we pass $[e_s;e_i;e_j]$ -- the concatenation of the summary feature with the tree-level features of trees $i$ and $j$ -- as input to the \textbf{edge MLP}. For unrooted tree topologies we need to distinguish two scenarios: (i) when only two rooted trees are left in the state (\ie, at the last step of PhyloGFN state transitions), we only need to generate a single edge; and (ii) when there are more than two rooted trees in the state, a pair of edges is required. Therefore, two separate edge MLPs are employed, as edge length is modeled by $k$ discrete bins, the edge MLP used at the last step has an output size of $k$ (to model a distribution over a single edge length) whereas the other edge MLP would have an output size of $k^2$ (to model a joint distribution over two edge lengths). 

For the temperature-conditioned PhyloGFN, as then temperature $T$ is passed to our PhyloGFN as an input, two major modifications are required: (i) the estimation of the partition $Z_\theta$ is now a function of $T$: $Z_\theta(T)$, which is modeled by a \textbf{Z MLP}; (ii) the summary token to the Transformer encoder also captures the temperature information by replacing the usual [CLS] token with a \textbf{temp MLP} that accepts $T$ as input.

\section{Training details}
\label{sec:training}
Here, we describe the training details for our PhyloGFN. For PhyloGFN-Bayesian, our models are trained with fixed 500 epochs. For PhyloGFN-Parsimony, our models are trained until the probability of sampling the optimal trees, or the most parsimonious trees our PhyloGFN has seen so far, is above 0.001. Each training epoch consists of 1000 gradient update steps using a batch of 64 trajectories. For $\eps$-greedy exploration, the $\eps$ value is linearly annealed from 0.5 to 0.001 during the first 240 epochs. All common hyperparameters for PhyloGFN-Bayesian and PhyloGFN-Parsimony are shown in Table~\ref{ap:hp-models}.

\paragraph{Temperature annealing} For PhyloGFN-Bayesian, the initial temperature is set to $16$ for all experiments. For PhyloGFN-Parsimony, $T$ is initialized at 4. Under the cascading temperature annealing scheme, $T$ is reduced by half per every 80 epochs of training. For PhyloGFN-Bayesian, $T$ is always reduced to and then fixed at 1, whereas for PhyloGFN-Parsimony, $T$ is only reduced when the most parsimonious trees seen by our PhyloGFN so far are sampled with a probability below 0.001. 

\paragraph{Hyperparameter $C$ selection} For PhyloGFN-Parsimony, the reward is defined as $R(x) = \exp\left(\frac{C-M(x|\mY)}{T}\right)$, where $C$ is a hyperparameter introduced for training stability and it controls the magnitude of the partition function $Z=\sum_x R(x)$. Given that we cannot determine the precise value of $C$ prior to training since we do know the value of $Z$ as a priori, we use the following heuristic to choose $C$: 1000 random tree topologies are generated via stepwise-addition, and we set $C$ to the $10\%$ quantile of the lowest parsimony score.

Similarly, $C$ is employed for PhyloGFN-Bayesian under the same goal of stabilizing the training and reducing the magnitude of the partition function. Recall the reward function $R(z,b)$ defined in \S\ref{sec:gfn_bayesian_phylo}, it can be rewritten as $R(z,b)=\exp\left(\frac{C- (-\log P(\mY|z,b)P(b))}{T}\right)$ when $T=1$. Note that $\exp\left(\frac{C}{T}\right)$ can be absorbed into the partition function. For selecting the $C$, once again we randomly sample 1000 weighted phylogenetic trees via stepwise-addition and with random edge length, followed by setting $C$ to the $10\%$ quantile of the lowest $-\log P(\mY|z,b)P(b)$.

\paragraph{Temperature-conditioned PhyloGFN-Parsimony} The temperature-conditioned PhyloGFN is introduced so that a single trained PhyloGFN can be used to sample from a series of reward distributions defined by different $T$. We modify the fixed-temperature PhyloGFN-Parsimony by introducing $T$ as input in 3 places: (i) the  reward function $R(x;T)$; (ii) the forward transition policy $P_F(x;T)$; and (iii) the learned partition function estimate $Z_\theta(T)$. To train the temperature-conditioned PhyloGFN, the TB objective also needs to be updated accordingly:
\begin{gather*}
    \gL_{\rm TB}(\tau; T)=\left(\log\frac{Z_{\theta}(T)\prod_{i=0}^{n-1}P_F(s_{i+1}\mid s_i;\theta, T)}{R(x,T)P_B(\tau\mid x)}\right)^2,\quad P_B(\tau\mid x):=\prod_{i=1}^n\frac{1}{|{\rm Pa}(s_i)|},
\end{gather*}
Note that $P_B$ is unaffected.

During training, values for $T$ are randomly selected from the range $[T_{\rm min}, T_{\rm max}]$. When training with a state from the replay buffer, the temperature used for training is resampled and may be different than the one originally used when the state was added to the buffer.


We also employ a scheme of gradually reducing the average of sampled $T$ values through the course of training: $T$'s are sampled from a truncated normal distribution with a fixed pre-defined variance and a moving mean $T_\mu$ that linearly reduces from $T_{\rm max}$ to $T_{\rm min}$. 


\paragraph{Modeling branch lengths with discrete multinomial distribution}
When a pair of trees are joined at the root, the branch lengths of the two newly formed edges are modeled jointly by a discrete multinomial distribution. The reason for using a joint distribution is because under the Jukes-Cantor evolution model, the likelihood at the root depends on the sum of the two branch lengths. 

We use a different set of maximum edge length, bin number and bin size depending on each dataset, and by testing various combinations we have selected the set with optimal performance. The maximum branch length is chosen among $\{0.05,0.1,0.2\}$, and bin size $\omega$ is chosen among $\{0.001, 0.002, 0.004\}$. Table \ref{table:bin_size} shows the our final selected bin size and bin number for each dataset.

\begin{table}[ht]
\centering
\small
\caption{Bin sizes and bin numbers used to model branch lengths for DS1 to DS8.}\vspace*{-1em}
\label{table:bin_size}
\begin{tabular}{@{}lcc}
\toprule
Dataset & Bin Size & \# Bins  \\ 
\midrule
DS1 & 0.001 & 50  \\
DS2 & 0.004 & 50  \\
DS3 & 0.004 & 50 \\
DS4 & 0.002 & 100 \\
DS5 & 0.002 & 100 \\
DS6 & 0.001 & 100 \\
DS7 & 0.001 & 200 \\
DS8 & 0.001 & 100 \\
\bottomrule
\end{tabular}
\end{table} 

\begin{table}[hbt!]
  \caption{Common hyperparameters for PhyloGFN-Bayesian and PhyloGFN-Parsimony.}\vspace*{-1em}
  \label{ap:hp-models}
  \centering
  \begin{tabular}{ll}
    \toprule
    \multicolumn{2}{c}{Transformer encoder} \\
    \midrule
    hidden size & 128 \\
    number of layers & 6 \\
    number of heads &  4 \\
    learning rate (model) & 5e-5 \\
    learning rate (Z) & 5e-3 \\
    \midrule
    \multicolumn{2}{c}{tree MLP} \\
    \midrule
    hidden size & 256 \\
    number of layers & 3 \\
    \midrule
    \multicolumn{2}{c}{edge MLP} \\
    \midrule
    hidden size & 256 \\
    number of layers & 3 \\
    \midrule
    \multicolumn{2}{c}{Z MLP (in temperature-conditioned PhyloGFN)} \\
    \midrule
    hidden size & 128 \\
    number of layers & 1 \\
    \midrule
    \multicolumn{2}{c}{temp MLP (in temperature-conditioned PhyloGFN)} \\
    \midrule
    hidden size & 256 \\
    number of layers & 3 \\
    \bottomrule
  \end{tabular}
\end{table}

\clearpage
\section{Continuous branch length modeling with PhyloGFN}\label{sec:new_continous_branch}

While the original PhyloGFN-Bayesian only samples discrete branch lengths, we have experimented with a new version of PhyloGFN that uses mixture of Gaussian to model and sample branch lengths — effectively treating them as continuous variables. We refer to this new version as PhyloGFN-Continuous and we point out that the edge modeling is the only implementational detail that differs from the previous models.

Specifically, the \textbf{edge MLP} of PhyloGFN-Continuous now outputs the parameters of the Gaussian mixture which models the logarithm of branch length. These include (1) logits for selecting the components of the Gaussian mixture, (2) mean of the log branch length in each mixture, and (3) log variance of the log branch length in each mixture. 

Although PhyloGFN-Continuous continues to employ two separate \textbf{edge MLP}, one for the last step of state transition where only a single edge needs to be sampled and the other for sampling a pair of edges at the intermediate step, PhyloGFN-Continuous now samples each edge independently as we have found this to benefit training stability.

It is also worth pointing out that we no longer perform $\epsilon$-greedy random exploration on branch lengths, again for the sake of training stability, and the mean of the log branch length is initialized at $-4.0$ instead of $0.0$, which is a more reasonable value for log branch length. There are five components in each Gaussian mixture.

Results are shown below. We are delighted to report that PhyloGFN-Continuous has eliminated the quantization error introduced in the earlier discrete PhyloGFN-Bayesian and therefore, our new model is effectively performing on par with the state of the art MrBayes and VBPI-GNN models. In particular, PhyloGFN-Continuous has the smallest standard deviation for the MLL estimation across all datasets, with the only exception being DS7 where PhyloGFN-Continuous closely matches the performance of VBPI-GNN.

Note that PhyloGFN-Continuous has undergone the same training routine that is described in \S\ref{sec:training}, as is PhyloGFN-Bayesian.

\begin{table}[!h]
\centering
\caption{Marginal log-likelihood estimation with different methods on real datasets DS1-DS8. PhyloGFN-C(ontinuous) now outperforms $\phi$-CSMC, GeoPhy and PhyloGFN-B(ayesian) across all datasets and it is effectively performing on par with the state of the arts MrBayes and VBPI-GNN.
}
\label{table:mll}

\resizebox{1\textwidth}{!}{
\begin{tabular}{@{}lr@{\hspace{0.3\tabcolsep}}lr@{\hspace{0.3\tabcolsep}}lr@{\hspace{0.3\tabcolsep}}lr@{\hspace{0.3\tabcolsep}}lr@{\hspace{0.3\tabcolsep}}lr@{\hspace{0.3\tabcolsep}}lr@{\hspace{0.3\tabcolsep}}}
\toprule
&\multicolumn{2}{c}{MCMC}&&&\multicolumn{8}{c}{ML-based / amortized, full tree space}\\\cmidrule(lr){2-3}\cmidrule(lr){6-13}
\multicolumn{1}{@{}c@{}}{Dataset} & \multicolumn{2}{c}{MrBayes SS} & \multicolumn{2}{c}{VBPI-GNN*} & \multicolumn{2}{c}{$\phi$-CSMC} & \multicolumn{2}{c}{GeoPhy} & \multicolumn{2}{c}{PhyloGFN-B} & \multicolumn{2}{c}{PhyloGFN-C} \\ 
\midrule
 DS1 & $ -7108.42 $ & \std{0.18} & $ -7108.41 $ & \std{0.14} & $ -7290.36 $ & \std{7.23} & $ -7111.55 $ & \std{0.07} & $ -7108.95 $ & \std{0.06} & $ -7108.40 $ & \std{0.04} \\
 DS2 & $ -26367.57 $ & \std{0.48} & $ -26367.73 $ & \std{0.07} & $ -30568.49 $ & \std{31.34} & $ -26368.44 $ & \std{0.13} & $ -26368.90 $ & \std{0.28} & $ -26367.70 $ & \std{0.04}\\ 
 DS3 & $ -33735.44 $ & \std{0.5} & $ -33735.12 $ & \std{0.09} & $ -33798.06 $ & \std{6.62} & $ -33735.85 $ & \std{0.12} & $ -33735.6 $ & \std{0.35} & $ -33735.11 $ & \std{0.02}\\
 DS4 & $ -13330.06 $ & \std{0.54} & $ -13329.94 $ & \std{0.19} & $ -13582.24 $ & \std{35.08} & $ -13337.42 $ & \std{1.32} & $ -13331.83 $ & \std{0.19} & $ -13329.91 $ & \std{0.02} \\
 DS5 & $ -8214.51 $ & \std{0.28} & $ -8214.64 $ & \std{0.38} & $ -8367.51 $ & \std{8.87} & $ -8233.89 $ & \std{6.63} & $ -8215.15 $ & \std{0.2} & $ -8214.38 $ & \std{0.16} \\
 DS6 & $ -6724.07 $ & \std{0.86} & $ -6724.37 $ & \std{0.4} & $ -7013.83 $ & \std{16.99} & $ -6733.91 $ & \std{0.57} & $ -6730.68 $ & \std{0.54} & $ -6724.17 $ & \std{0.10} \\
 DS7 & $ -37332.76 $ & \std{2.42} & $ -37332.04 $ & \std{0.12} &  &  & $ -37350.77 $ & \std{11.74} 
 & $ -37359.96 $ & \std{1.14} & $ -37331.89 $ & \std{0.14} \\
 DS8 & $ -8649.88 $ & \std{1.75} & $ -8650.65 $ & \std{0.45} & $ -9209.18 $ & \std{18.03} & $ -8660.48 $ & \std{0.78} & $ -8654.76 $ & \std{0.19} & $ -8650.46 $ & \std{0.05} \\
\bottomrule
\end{tabular}
}
\vspace{-1em}
\end{table}

\clearpage
\section{Running time and hardware requirements} \label{sec:running_time}
PhyloGFN is trained on virtual machines equipped with 10 CPU cores and 10GB RAM for all datasets. We use one V100 GPU for datasets DS1-DS6 and one A100 GPU for DS7-DS8, although the choice of hardware is not essential for running our training algorithms.

For Bayesian inference, the models used for the MLL estimation in Table \ref{table:mll} are trained on a total of 32 million examples, with a training wall time ranging from 3 to 8 days across the eight datasets. However, our algorithm demonstrates the capacity to achieve similar performance levels with significantly reduced training data. The table \ref{tab:repeated_runs} below presents the performance of PhyloGFN with 32 million training examples (\textbf{PhyloGFN Full}) and with only $40\%$ of the training trajectories (\textbf{PhyloGFN Short}). Each type of experiment is repeated 3 times. Table \ref{tab:experiment-durations} compares running time of the full experiment with the shorter experiment. The tables show that the shorter runs exhibit comparable performance to our full run experiments, and all conclude within 3 days.

We compare the running time of PhyloGFN with VI baselines (VBPI-GNN, Vaiphy, and GeoPhy) using the DS1 dataset. VBPI-GNN and GeoPhy are trained using the same virtual machine configuration as PhyloGFN (10 cores, 10GB ram, 1xV100 GPU). The training setup for both algorithms mirrors the one that yielded the best performance as documented in their respective papers. As for VaiPhy, we employed the recorded running time from the paper on a machine with 16 CPU cores and 96GB RAM. For PhyloGFN, we calculate the running time of the full training process (PhyloGFN-Full) and four shorter experiments with 40\%, 24\%, 16\% and 8\% training examples. The table \ref{table:running_time_comparison} documents both the running time and MLL estimation for each experiment. While our most comprehensive experiment, PhyloGFN Full, takes the longest time to train, our shorter runs -- all of which conclude training within a day -- show only a marginal degradation in performance. Remarkably, even our shortest run, PhyloGFN - 8\%, outperforms both GeoPhy and $\phi$-CSMC, achieving this superior performance with approximately half the training time of GeoPhy.

\begin{table}[ht]
    \centering
    \caption{PhyloGFN-Bayesian MLL estimation on 8 datasets. We repeat both full experiment and short experiment (with 40\% training examples) 3 times}
    \label{tab:repeated_runs}
    \resizebox{1\textwidth}{!}{
        \begin{tabular}{ccccccccccc}
            \toprule
            Experiment & \multicolumn{3}{c}{PhyloGFN Full} & \multicolumn{3}{c}{PhyloGFN Short (40\% Training data)} \\
            \cmidrule(lr){2-4} \cmidrule(lr){5-7}
            Repeat & 1 & 2 & 3 & 1 & 2 & 3 \\
            \midrule
            DS1 & -7108.95 \std{0.06} & -7108.97 \std{0.05} & -7108.94 \std{0.05} & -7108.97 \std{0.14} & -7108.94 \std{0.22} & -7109.04 \std{0.08} \\
            DS2 & -26368.9 \std{0.28} & -26368.77 \std{0.43} & -26368.89 \std{0.29} & -26368.9 \std{0.39} & -26369.03 \std{0.31} & -26368.88 \std{0.32} \\
            DS3 & -33735.6 \std{0.35} & -33735.60 \std{0.40} &  -33735.68 \std{0.64} & -33735.9 \std{0.91} & -33735.83 \std{0.62} & -33735.76 \std{0.75} \\
            DS4 & -13331.83 \std{0.19} & -13331.80 \std{0.31} & -13331.94 \std{0.42} & -13332.04 \std{0.57} & -13331.87 \std{0.31} & -13331.78 \std{0.37} \\
            DS5 & -8215.15 \std{0.2} & -8214.92 \std{0.27} & 8214.85 \std{0.28} & -8215.38 \std{0.27} & -8215.37 \std{0.26} & -8215.38 \std{0.25} \\
            DS6 & -6730.68 \std{0.54} & -6730.72 \std{0.26} & -6730.89 \std{0.22} & -6731.35 \std{0.31} & -6731.2 \std{0.4} & -6731.1 \std{0.38} \\
            DS7 & -37359.96 \std{1.14} & -37360.59 \std{1.62} & -37361.51 \std{2.89} & -37362.03 \std{5.2} & -37363.43 \std{2.2} & -37362.37 \std{2.65} \\ 
            DS8 & -8654.76 \std{0.19} & -8654.67 \std{0.39} & -8654.86 \std{0.15} & -8655.8 \std{0.95} & -8655.65 \std{0.37} & -8654.96 \std{0.46} \\
            \bottomrule
        \end{tabular}
    }
\end{table}

\begin{table}[ht]
    \centering
    \caption{PhyloGFN-Bayesian training time}
    \label{tab:experiment-durations}
    \begin{tabular}{lcc}
        \toprule
         Dataset &  PhyloGFN Full &  PhyloGFN Short \\
        \midrule
         DS1 & 62h40min & 20h40min \\
         DS2 & 69h16min & 28h \\
         DS3 & 80h20min & 35h40min \\
         DS4 & 103h54min & 44h30min \\
         DS5 & 127h50min & 51h40min \\
         DS6 & 135h10min & 53h10min \\
         DS7 & 174h3min & 60h20min \\
         DS8  & 190h25min & 61h40min \\
        \bottomrule
    \end{tabular}
\end{table}

\begin{table}[ht]
    \centering
    \caption{Running time of PhyloGFN-Bayesian and VI baseline methods on DS1}
    \label{table:running_time_comparison}
        \begin{tabular}{lcc}
            \hline
            & Running Time & MLL \\
            \hline
            VBPI-GNN & 16h10min & -7108.41 \std{0.14} \\
            GeoPhy & 12h50min & -7111.55 \std{0.07} \\
            $\phi$-CSMC** & $\sim$2h & -7290.36 \std{7.23} \\
            PhyloGFN - Full & 62h40min & -7108.97 \std{0.05} \\
            PhyloGFN - 40\% & 20h40min & -7108.97 \std{0.14} \\
            PhyloGFN - 24\% & 15h40min & -7109.01 \std{0.15} \\
            PhyloGFN - 16\% & 10h50min & -7109.15 \std{0.23} \\
            PhyloGFN - 8\% & 5h10min & -7110.65 \std{0.39} \\
            \hline
        \end{tabular}

\end{table}

\section{Ablation study}

\subsection{Branch lengths model hyperparameters }
PhyloGFN models branch lengths using discrete multinomial distributions. When estimating MLL, the branch length model is transformed into a piecewise-constant continuous form, introducing a small quantization error. Two hyperparameters define the multinomial distribution: bin size and bin number. This analysis investigates the impact of bin size and bin number on MLL estimation. 

For the fixed bin size of 0.001, we assess four sets of bin numbers: 50, 100, 150, and 200, and for the fixed bin number of 50, we evaluate three sets of bin sizes: 0.001, 0.002, and 0.004. For each setup, we train a PhyloGFN-Bayesian model on the dataset DS1 using 12.8 millions training examples.

Table \ref{tab:ablation_cat} displays the MLL obtained in each setup. A noticeable decline in MLL estimation occurs as the bin size increases, which is expected due to the increased quantization error. However, MLL estimation also significantly declines as the number of bins increases over 100 under the fixed bin size of 0.001. We conjecture this is because the size of the action pool for sampling the pair of branch lengths increases quadratically by the number of bins. 
For example at 200 bins, the MLP head that generates branch lengths has 40,000 possible outcomes, leading to increased optimization challenges.

\subsection{Exploration policies}
PhyloGFN employs the following techniques to encourage exploration during training:
\begin{itemize}
    \item $\epsilon$-greedy annealing: a set of trajectories are generated from the GFlowNet that is being trained, with actions sampled from the GFlowNet policy with probability $1- \epsilon$ and uniformly at random with probability $\epsilon$. The $\epsilon$ rate drops from a pre-defined $\epsilon_{start}$ to near 0 through the course of training.
    \item Replay Buffer: a replay buffer is used to store the best trees seen to date, and to use them for training, random trajectories are constructed from these high-reward trees using the backward probability $P_B$.
    \item Temperature annealing: the training of GFlowNet begins at a large temperature $T$ and it is divided in half periodically during training.
\end{itemize}

To assess the effectiveness of various exploration methods, we train PhyloGFN-Bayesian on DS1 with the following setups:
\begin{enumerate}
    \item All trajectories are generated strictly on-policy training without any exploration methods (On-policy).
    \item All trajectories are generated with epsilon-greedy annealing ($\epsilon$).
    \item Half of trajectories are generated with $\epsilon$-greedy annealing, and the other half are constructed from the replay buffer ($\epsilon$ + RP).
    \item The same setup as 3, with the addition of temperature annealing. $T$ is initialized at 16 and reduced by half per every 1.5 million training examples until reaching 1 ($\epsilon$ + RP + T Cascading).
    \item The same setup as 4, except the temperature drops linearly from 16 to 1 ($\epsilon$ + RP + T Linear).
\end{enumerate}

For each setup, we train a model using 12.8 millions training examples. Table \ref{tab:ablation_exploration} displays the MLL estimation for each setup. On-policy training without any additional exploration strategies results in the poorest model performance. Epsilon-greedy annealing significantly enhances performance, and combining all three strategies yields the optimal results. There is no significant difference between cascading and linear temperature drops.

\begin{table}[ht]
\centering
\caption{MLL estimation of PhyloGFN-Bayesian on DS1 with different combinations of bin size and bin number to model edge lengths.}
\label{tab:ablation_cat}
\begin{tabular}{ccc}
\toprule
Bin Size & Bin Number & MLL \\
\midrule
0.001 & 50 & -7108.98 \std{0.14} \\
0.001 & 100 & -7108.96 \std{0.18} \\
0.001 & 150 & -7109.17 \std{1.14} \\
0.001 & 200 & -7109.3 \std{7.23} \\
0.002 & 50 & -7109.95 \std{0.29} \\
0.004 & 50 & -7114.48 \std{0.52} \\
\bottomrule
\end{tabular}
\end{table}

\begin{table}[ht]
\centering
\caption{MLL estimation of PhyloGFN-Bayesian on DS1 with different exploration policies}
\label{tab:ablation_exploration}
\begin{tabular}{lc}
\toprule
Training Policies & MLL \\
\midrule
On policy & -7307.99 \std{0.26} \\
$\epsilon$ & -7109.92 \std{0.18} \\
$\epsilon$ + RP & -7109.95 \std{0.22} \\
$\epsilon$ + RP + T Cascading & 7108.98 \std{0.14} \\
$\epsilon$ + RP + T Linear & 7108.96 \std{0.15} \\
\bottomrule
\end{tabular}
\end{table}

\section{Significance of modeling suboptimal trees well} \label{significance_suboptimal_trees}
Several downstream tasks that rely on phylogenetic inference require not only the identification of optimal (maximum likelihood or most parsimonious) trees, but also the proper sampling of suitably weighted suboptimal trees. In evolutionary studies, this includes the computation of branch length support (i.e., the probability that a given subtree is present in the true tree), as well as the estimation of confidence intervals for the timing of specific evolutionary events~\citep{10.1371/journal.pbio.0040088}. These are particularly important in the very common situations where the data only weakly define the posterior on tree topologies (e.g. small amount of data per species, or high degree of divergence between sequences). In such cases, because suboptimal trees vastly outnumber optimal trees, they can contribute to a non-negligible extent to the probability mass of specific marginal probabilities. Full modeling of posterior distributions on trees is also critical in studying tumor or virus evolution within a patient~\citep{doi:10.1126/scitranslmed.aaa1408,10.1093/bioinformatics/btp466}, e.g., to properly assess the probability of the ordered sequence of driver or passenger mutations that may have led to metastasis or drug resistance~\citep{Fisher2013}.

\clearpage
\section{Reproducibility}\label{sec:reproducibility}

In this section, we describe the procedure to reproduce our results. Our most recent implementation involving continuous branch length modeling can be accessed through the following link: \url{https://github.com/zmy1116/phylogfn/}. For parsimony analysis and bayesian inference with discrete modeling, refer to code and data provided in the supplementary material.

\subsection{MrBayes}

To reproduce the exact MLL results from \citep{zhang2018generalizing}, MrBayes needs to be installed with \textit{Anaconda}. 
\begin{verbatim}
    conda install bioconda::mrbayes
\end{verbatim}

With MrBayes built from source, the resulted MLL scores differs slightly, as pointed out in \citep{mimori2023geophy}. We use the following script to compute MLL for DS1--DS8.

\begin{verbatim}
execute INPUT_FOLDER/ds1.nex
lset nst=1
prset statefreqpr=fixed(equal)
prset brlenspr=unconstrained:exp(10.0)
ss ngen=10000000 nruns=10 nchains=4 printfreq=1000 \
samplefreq=100 savebrlens=yes filename=OUTPUT_FOLDER
\end{verbatim}

\subsection{VBPI-GNN}
We use the script provided in the paper's GitHub page to generate models for DS1--DS8:
\begin{verbatim}
python main.py --dataset DS1  --brlen_model gnn --gnn_type edge \
--hL 2 --hdim 100 --maxIter 400000 --empFreq --psp
\end{verbatim}

VBPI-GNN requires pre-generation of trees. In the original work, the trees topologies are generated from the algorithm UFBoot \citep{minh2013ultrafast, hoang2018ufboot2}. We use the program iqtree \citep{minh2020iq} to run UFBoot on all datasets with following command:
\begin{verbatim}
iqtree -s DS1 -bb 10000 -wbt -m JC69 -redo
\end{verbatim}
To generate randomly perturbed trees to compare log sampling density versus log posterior density. Run the following notebook in our supplementary data folder: 
\begin{verbatim}
vbpi_gnn/generate_trees_data.ipynb
\end{verbatim}

\subsection{GeoPhy}
We run the following script to run GeoPhy on data set DS1. The script is created based on the best configuration recorded in \citet{mimori2023geophy}
\begin{verbatim}
python ./scripts/run_gp.py  -v -ip ds-data/ds1/DS1.nex \ 
-op results/ds1/ds1_paper -c ./config/default.yaml -s 0 -ua \ 
-es lorentz -edt full -ed 4 -eqs 1e-1 -ast 100_1000 -mcs 3 \ 
-ci 1000 -ms 1000_000 -ul
\end{verbatim}

\subsection{PhyloGFN-Bayesian}
Training configuration files for DS1--DS8 are in the following folder in the supplementary data:

\begin{verbatim}
phylo_gfn/src/configs/benchmark_datasets/bayesian
\end{verbatim}

Run the following script to train the PhyloGFN-Bayesian model:
\begin{verbatim}
cd phylo_gfn

python run.py configs/benchmark_datasets/bayesian/ds1.yaml \
dataset/benchmark_datasets/DS1.pickle $output_path --nb_device 2 \ 
--acc_gen_sampling --random_sampling_batch_size 1000 --quiet
\end{verbatim}

To compute log MLL, plot sampled log density versus log posterior density and compute the Pearson correlation, run the following notebook in supplementary data:
\begin{verbatim}
phylo_gfn/phylogfn_bayesian_result.ipynb
\end{verbatim}

Models for repeated experiments of full training and short training are trained with a memory efficient PhloGFN implementation in folder \textbf{phylo\_gfn\_bayesian\_memoryefficient}  

The full training scripts are in folder

\begin{verbatim}
src/configs/benchmark_datasets/common_script_normal
\end{verbatim}

The short training scripts are in folder
\begin{verbatim}
src/configs/benchmark_datasets/common_script_short
\end{verbatim}

Training scripts for ablation study are in folder

\begin{verbatim}
src/configs/benchmark_datasets/ds1_ablation_study
\end{verbatim}

\subsection{PhyloGFN-Parsimony}
Training configuration files for DS1--DS8 are in the following folder in the supplementary data:
\begin{verbatim}
phylo_gfn/src/configs/benchmark_datasets/parsimony
\end{verbatim}

Run the following script to train the PhyloGFN-Parsimony model: 
\begin{verbatim}
cd phylo_gfn

python run.py src/configs/benchmark_datasets/parsimony/ds1.yaml \
dataset/benchmark_datasets/DS1.pickle $output_path --nb_device 2 \ 
--acc_gen_sampling --random_sampling_batch_size 1000 --quiet
\end{verbatim}

To generate the plots in Fig.~\ref{fig:parsimony_boxplots}, run the following script:

\begin{verbatim}
cd phylo_gfn
python parsimony_eval.py <model_folder_path> <sequences_path>
\end{verbatim}

\subsection{Temperature-conditioned PhyloGFN}
To reproduce the result in Fig.~\ref{fig:ds1_parsimony_conditional}, first run the following script to train the model:
\begin{verbatim}
cd phylo_gfn

python run.py configs/benchmark_datasets/parsimony/ds1_conditional.yaml \
dataset/benchmark_datasets/DS1.pickle $output_path --nb_device 2 \ 
--acc_gen_sampling --random_sampling_batch_size 1000 --quiet
\end{verbatim}

To generate the plots, run the following notebook in the supplementary data:
\begin{verbatim}
phylo_gfn/ds1_conditional.ipynb
\end{verbatim}

\clearpage
\section{Assessing sampling density against unnormalized posterior density}
\label{sec:sampling_vs_posterior}

Here, we assess PhyloGFN-Bayesian's ability to model the entire phylogenetic tree space by comparing sampling density against unnormalized posterior density over high/medium/low-probability regions of the tree space. To generate trees from medium and low probability regions, we use a trained VBPI-GNN model, the state-of-the-art variational inference algorithm.

We first provide a brief introduction to how VBPI-GNN generates a phylogenetic tree in two phases: (i) it first uses SBN to sample tree topology; (ii) followed by a GNN-based model to sample edge lengths over the tree topology. SBN constructs tree topologies in a sequential manner. Starting from the root node and a set of sequences $\mY$ to be labeled at the leaves, the algorithm iteratively generates the child nodes by splitting and distributing the sequence set among the child nodes. The action at each step is thus to choose how to split a set of sequences into two subsets.

To introduce random perturbations during tree construction, at every step, with probability $\epsilon$, the action is uniformly sampled from the support choices. Given a well-trained VBPI-GNN model, we sample from high/medium/low-probability regions with $0\%$, $30\%$ and $50\%$ probability. We apply PhyloGFN-Bayesian and VBPI-GNN on these sets to compute sampling density and compare with the ground truth unnormalized posterior density computed as $P(\mY|z,b)P(z,b)$. Note that VBPI-GNN models the \emph{log-edge length} instead of the edge length. Hence we perform a change of variables when computing both the sampling probability and the unnormalized prior. 

Fig.~\ref{fig:pearson_g_ds2_ds7} shows scatter plots of sampling density versus ground-truth unnormalized posterior density of datasets DS2 to DS8, complementing the result on DS1 shown in Fig.~\ref{fig:pearsonr_g}. We can see that while VBPI-GNN performs well on high-probability regions, our method is better at modeling the target distribution overall. Fig.~\ref{fig:pearson_g_ds7} shows that our model performs poorly at modeling the high-probability region for DS7. The result aligns with our MLL estimation, shown in Table \ref{table:mll}. Further work is required to investigate the cause of PhyloGFN's poor modeling on DS7. 

\begin{figure}[ht]
     \centering
     \begin{subfigure}[b]{0.95\textwidth}
         \centering
         \includegraphics[width=\textwidth]{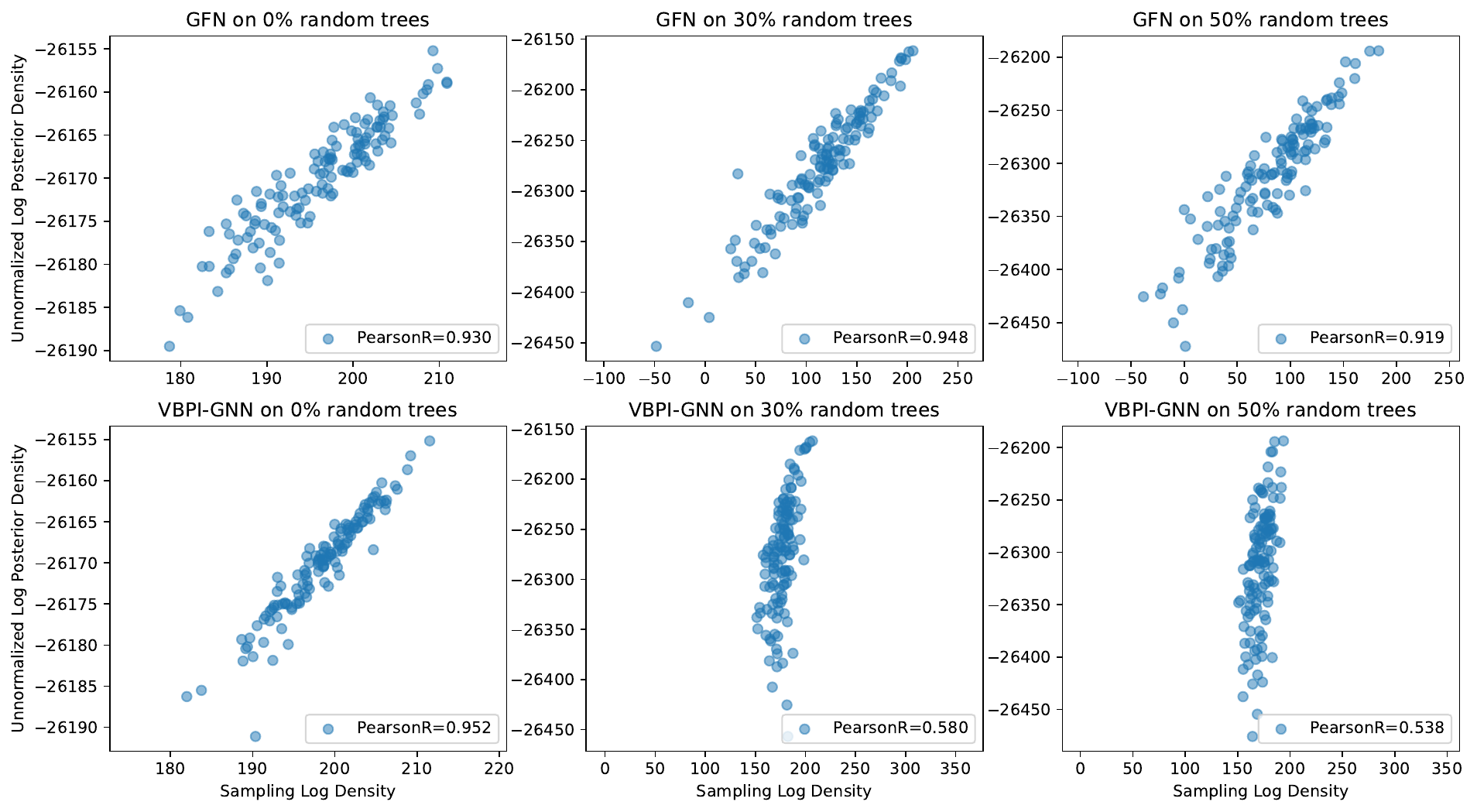}
         \caption{DS2}
         \label{fig:pearson_g_ds2}
     \end{subfigure}
      \caption{Sampling log density vs. ground truth unnormalized posterior log density for DS2-DS8 }
      \label{fig:pearson_g_ds2_ds7}
\end{figure}

\begin{figure}\ContinuedFloat
     \centering
     \begin{subfigure}[b]{0.9\textwidth}
         \centering
         \includegraphics[width=\textwidth]{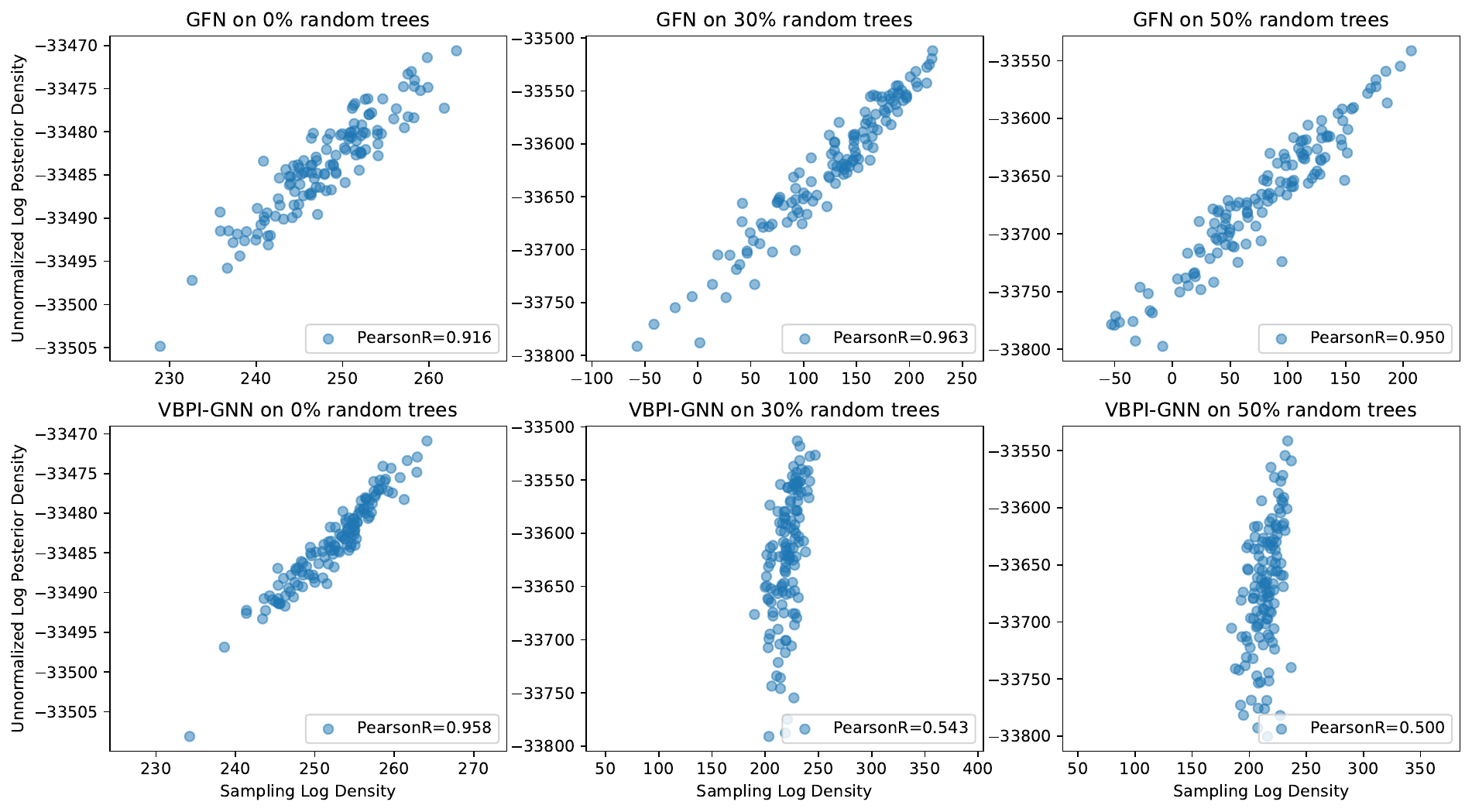}
         \caption{DS3}
         \label{fig:pearson_g_ds3}
     \end{subfigure}
     \vfill
     \begin{subfigure}[b]{0.9\textwidth}
         \centering
         \includegraphics[width=\textwidth]{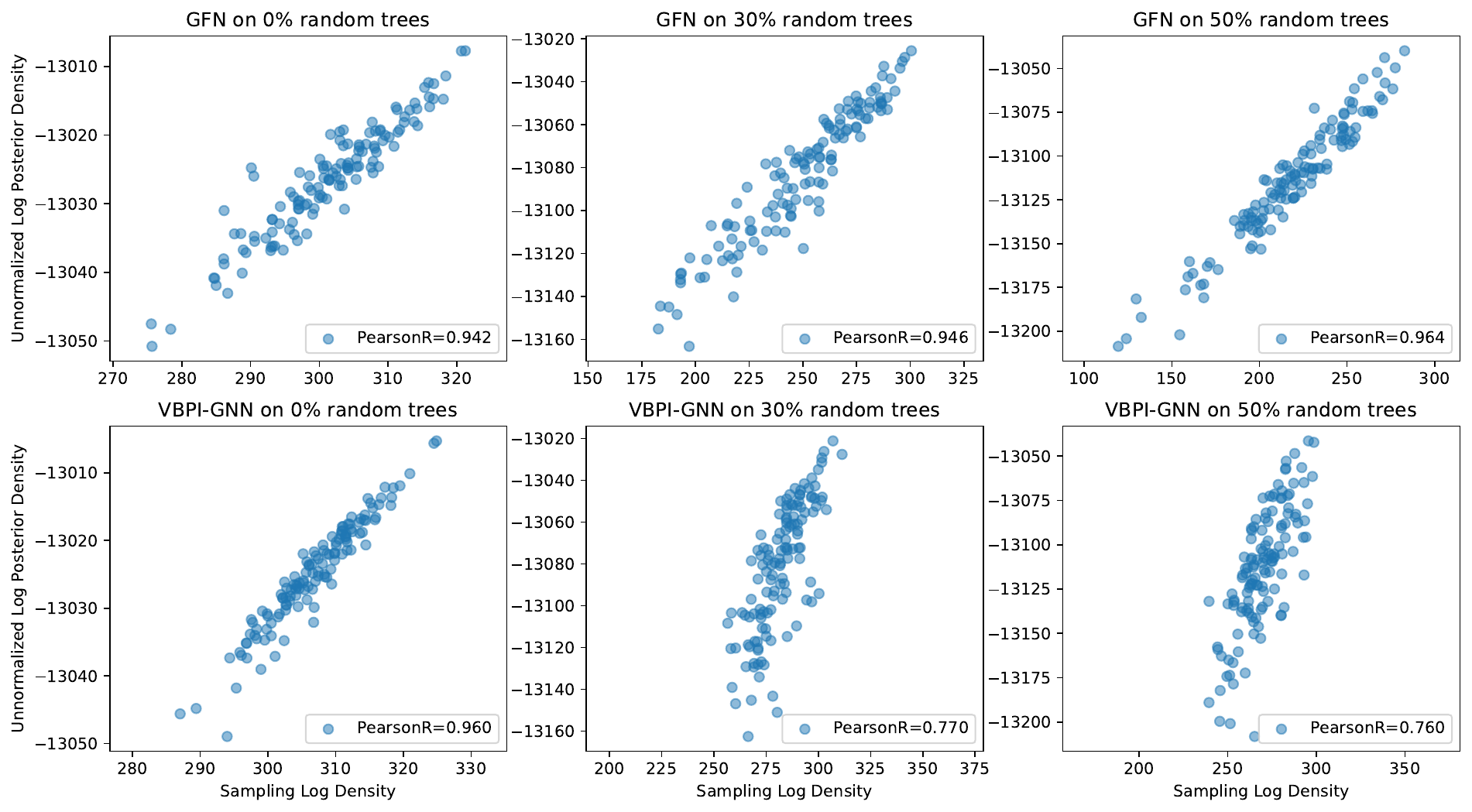}
         \caption{DS4}
         \label{fig:pearson_g_ds4}
     \end{subfigure}
     \vfill
     \begin{subfigure}[b]{0.9\textwidth}
         \centering
         \includegraphics[width=\textwidth]{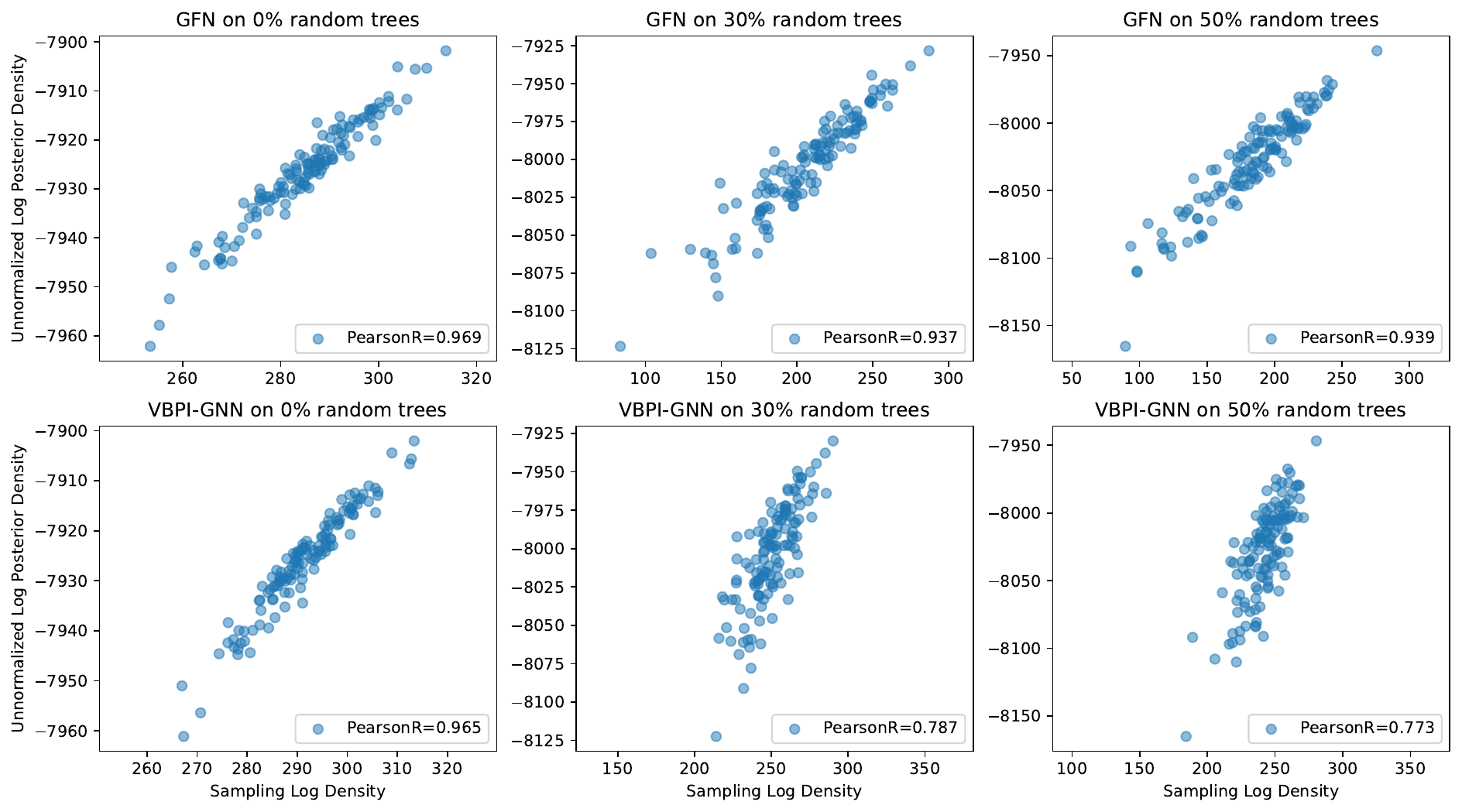}
         \caption{DS5}
         \label{fig:pearson_g_ds5}
     \end{subfigure}
     \vfill
     \caption{(cont.)}
\end{figure}

\begin{figure}\ContinuedFloat
     \centering
     \begin{subfigure}[b]{0.9\textwidth}
         \centering
         \includegraphics[width=\textwidth]{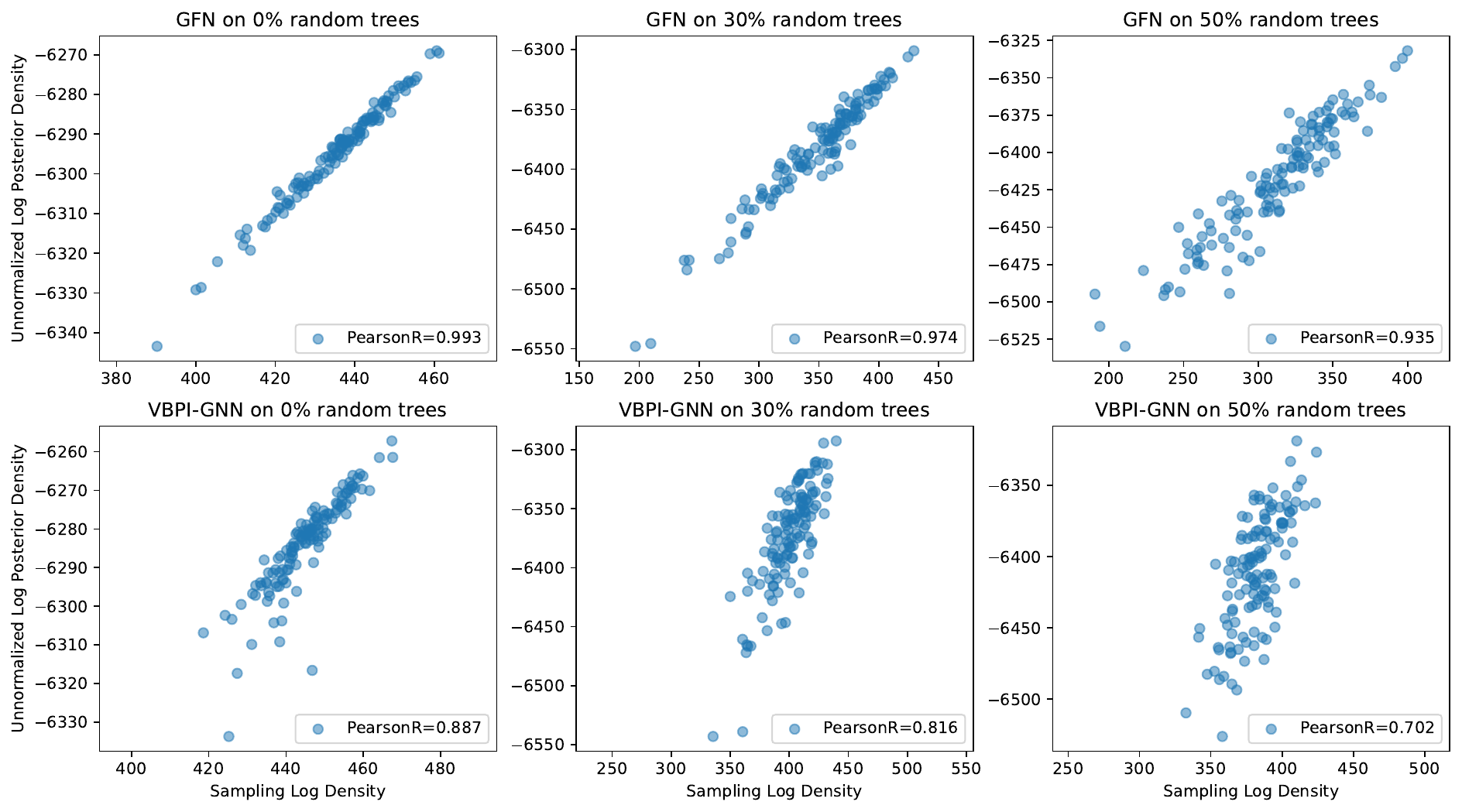}
         \caption{DS6}
         \label{fig:pearson_g_ds6}
     \end{subfigure}
     \begin{subfigure}[b]{0.9\textwidth}
         \centering
         \includegraphics[width=\textwidth]{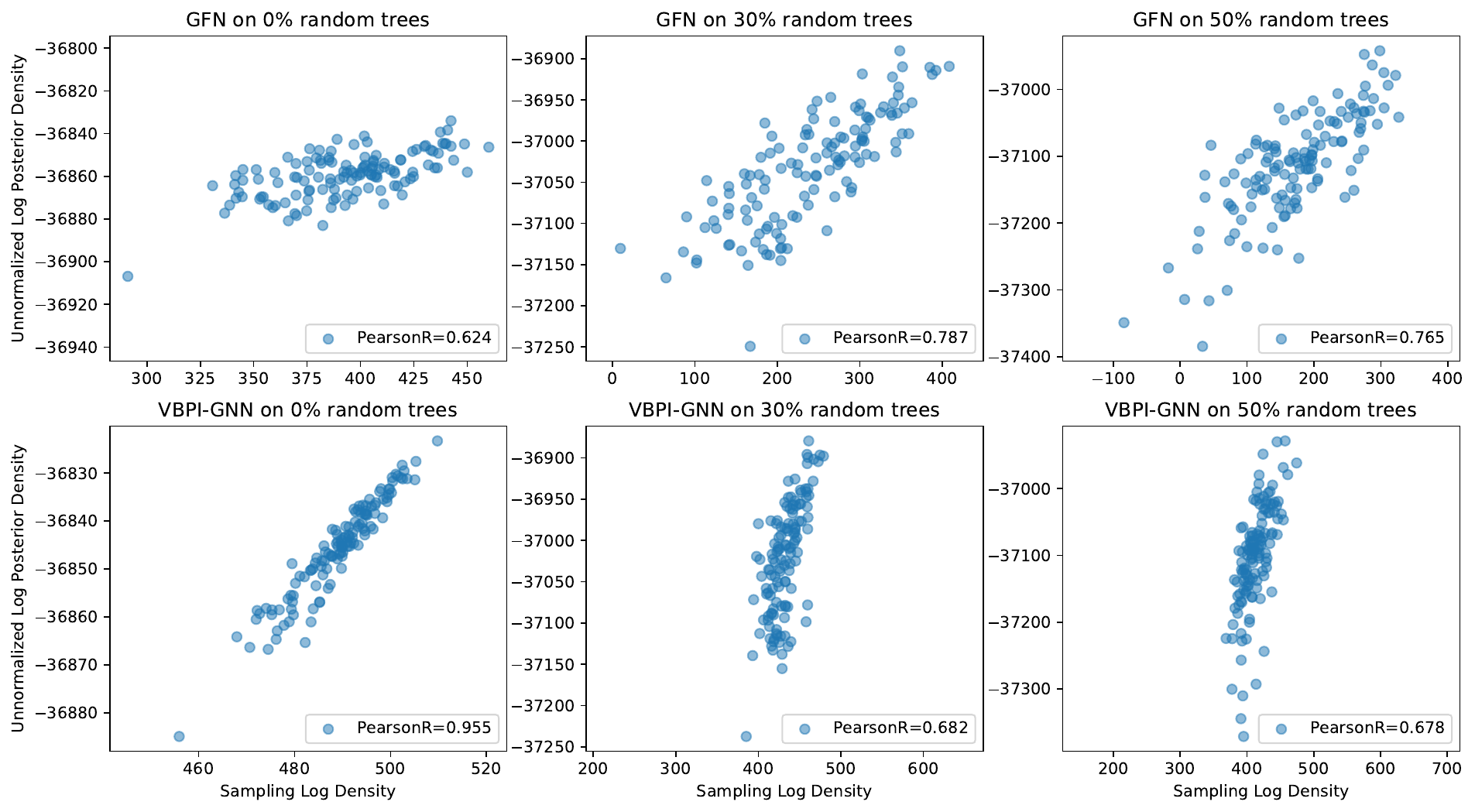}
         \caption{DS7}
         \label{fig:pearson_g_ds7}
     \end{subfigure}
     \vfill
     \begin{subfigure}[b]{0.9\textwidth}
         \centering
         \includegraphics[width=\textwidth]{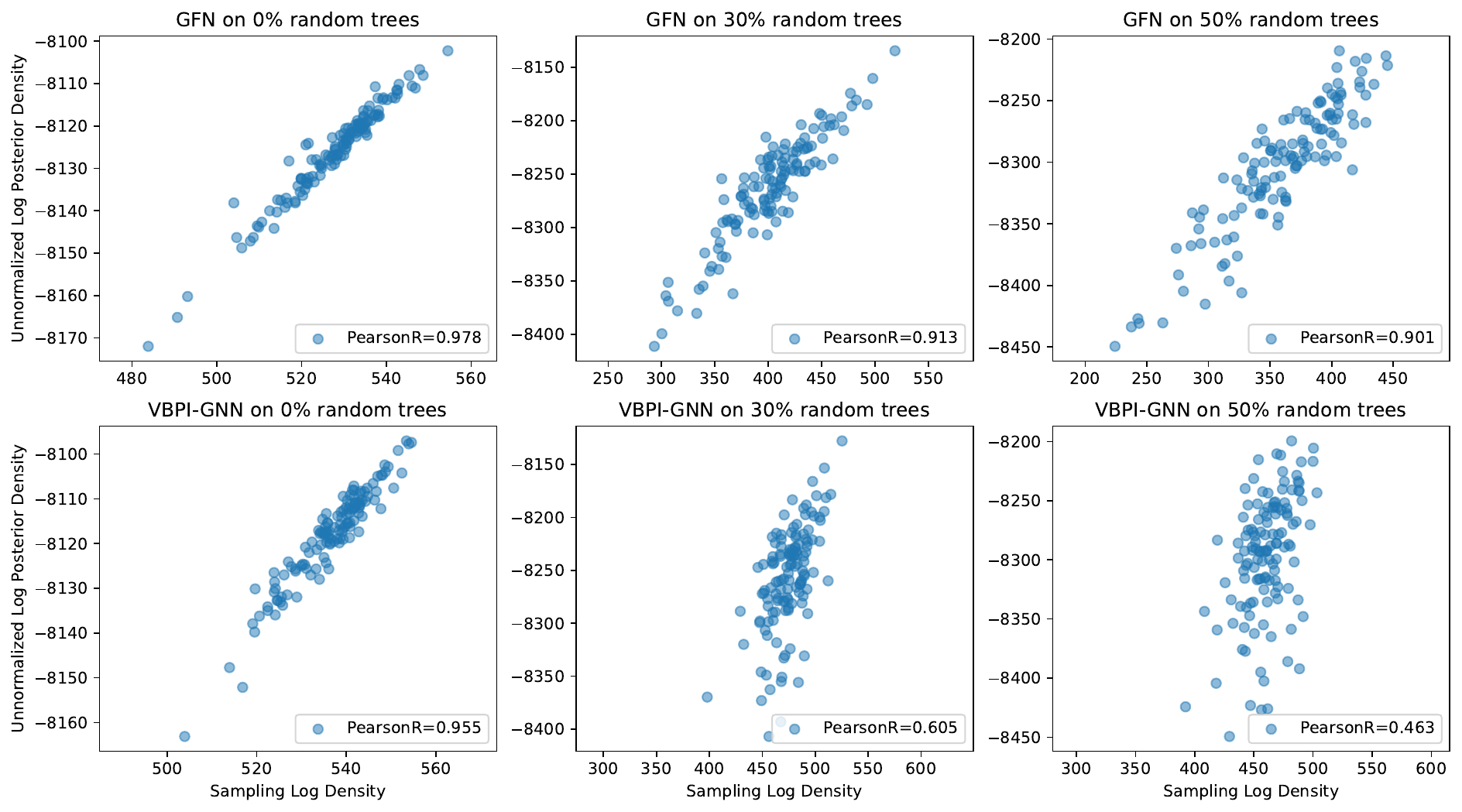}
         \caption{DS8}
         \label{fig:pearson_g_ds8}
     \end{subfigure}
     \vfill
\end{figure}

\clearpage
\section{Parsimony analysis results}
\label{sec:parsimony_results}
\begin{figure}[ht]
     \centering
     \begin{subfigure}[b]{0.85\textwidth}
         \centering
         \includegraphics[width=\textwidth]{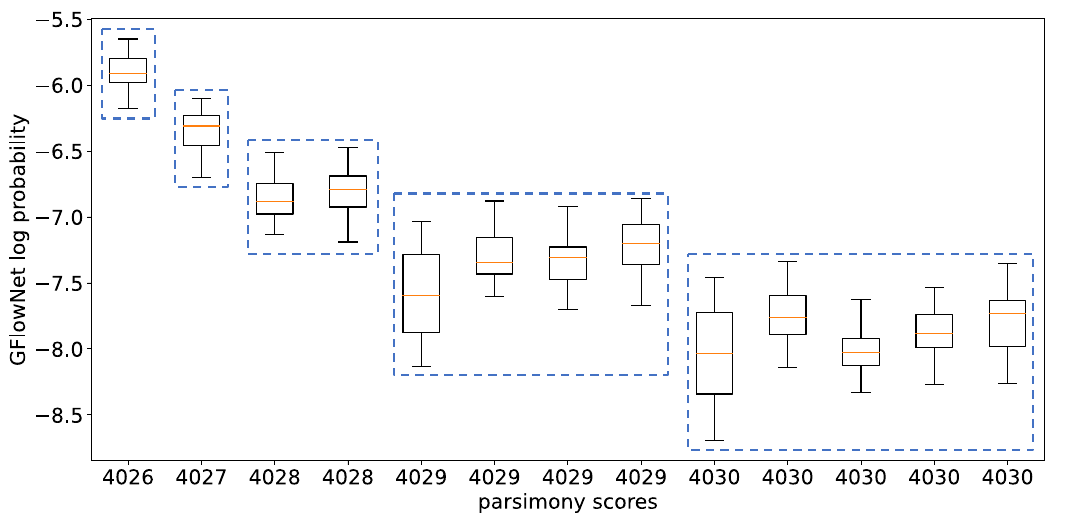}
         \caption{A single most-parsimonious tree with a score of 4026 has been identified for DS1.}
         \label{fig:parsimony_boxplots_ds1}
     \end{subfigure}
     \vfill
     \begin{subfigure}[b]{0.85\textwidth}
         \centering
         \includegraphics[width=\textwidth]{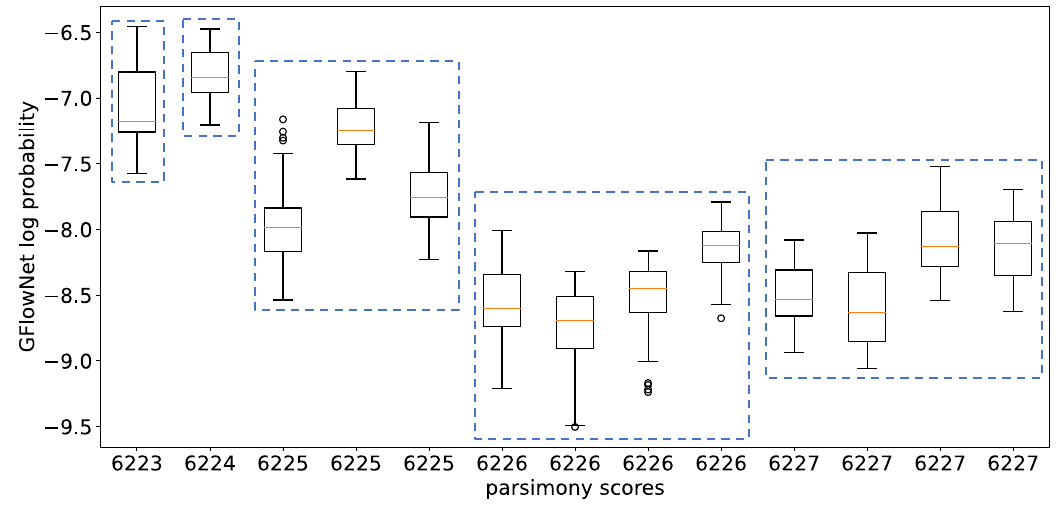}
         \caption{A single most-parsimonious tree with a score of 6223 has been identified for DS2.}
         \label{fig:parsimony_boxplots_ds2}
     \end{subfigure}
     \vfill
     \begin{subfigure}[b]{0.85\textwidth}
         \centering
         \includegraphics[width=\textwidth]{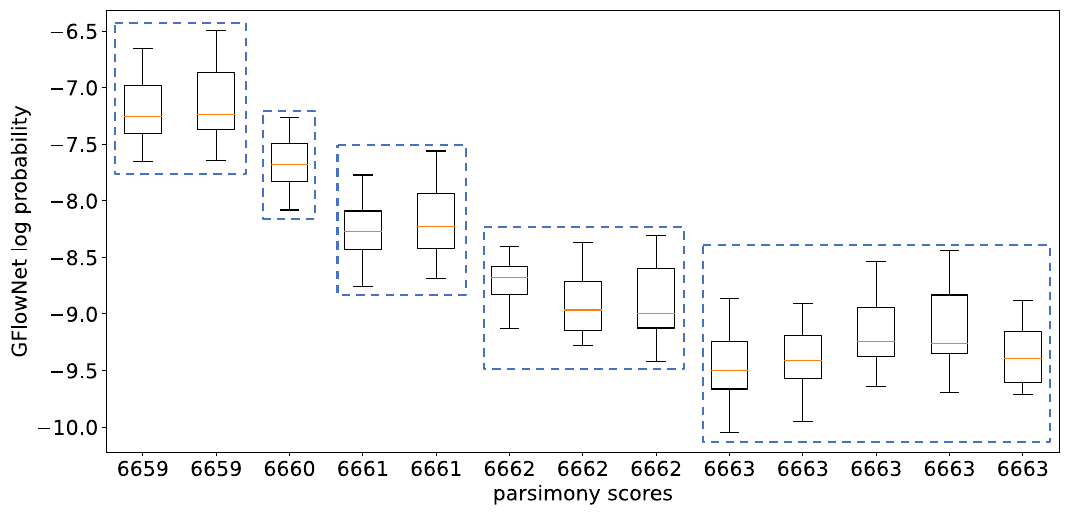}
         \caption{Two most-parsimonious trees with a score of 6659 have been identified for DS3.}
         \label{fig:parsimony_boxplots_ds3}
     \end{subfigure}
        \caption{For each dataset, the sampling probabilities for the most-parsimonious as well as several suboptimal trees are estimated from our learned PhyloGFNs. Each boxplot represents a single unique unrooted tree and shows a distribution of the log-probabilities over all its $2n-3$ rooted versions where $n$ denotes the number of species. The dashed bounding box groups the set of equally-parsimonious trees.}
        \label{fig:parsimony_boxplots}
\end{figure}

\begin{figure}\ContinuedFloat
     \centering
     \begin{subfigure}[b]{0.9\textwidth}
         \centering
         \includegraphics[width=\textwidth]{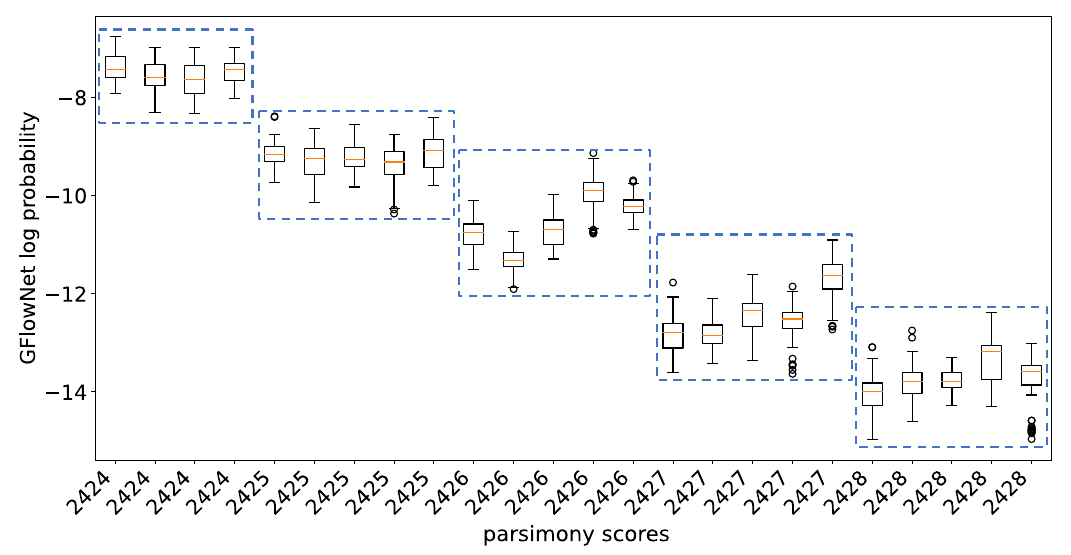}
         \caption{Four most-parsimonious trees with a score of 2424 have been identified for DS4.}
         \label{fig:parsimony_boxplots_ds4}
     \end{subfigure}
     \vfill
     \begin{subfigure}[b]{0.9\textwidth}
         \centering
         \includegraphics[width=\textwidth]{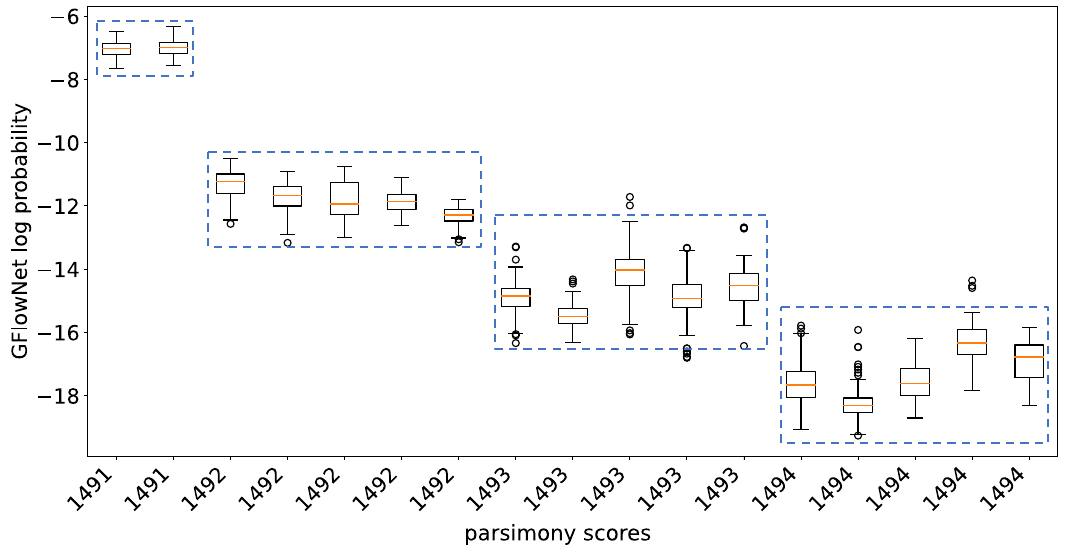}
         \caption{Two most-parsimonious trees with a score of 1491 have been identified for DS5.}
         \label{fig:parsimony_boxplots_ds5}
     \end{subfigure}
     \vfill
     \begin{subfigure}[b]{0.9\textwidth}
         \centering
         \includegraphics[width=\textwidth]{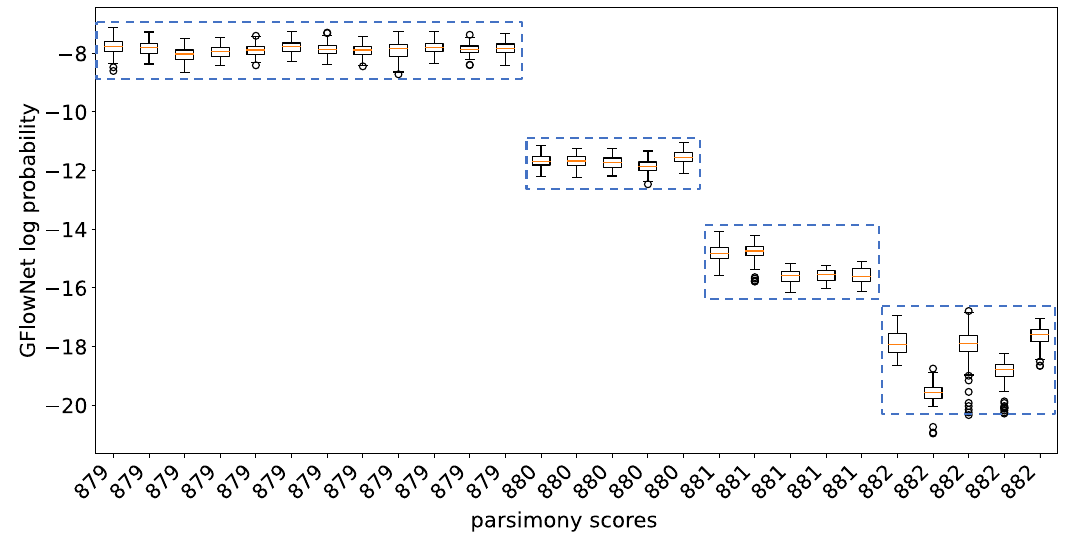}
         \caption{12 most-parsimonious trees with a score of 879 have been identified for DS6.}
         \label{fig:parsimony_boxplots_ds6}
     \end{subfigure}
     \caption{(cont.)}
\end{figure}

\begin{figure}\ContinuedFloat
     \centering
     \begin{subfigure}[b]{0.9\textwidth}
         \centering
         \includegraphics[width=\textwidth]{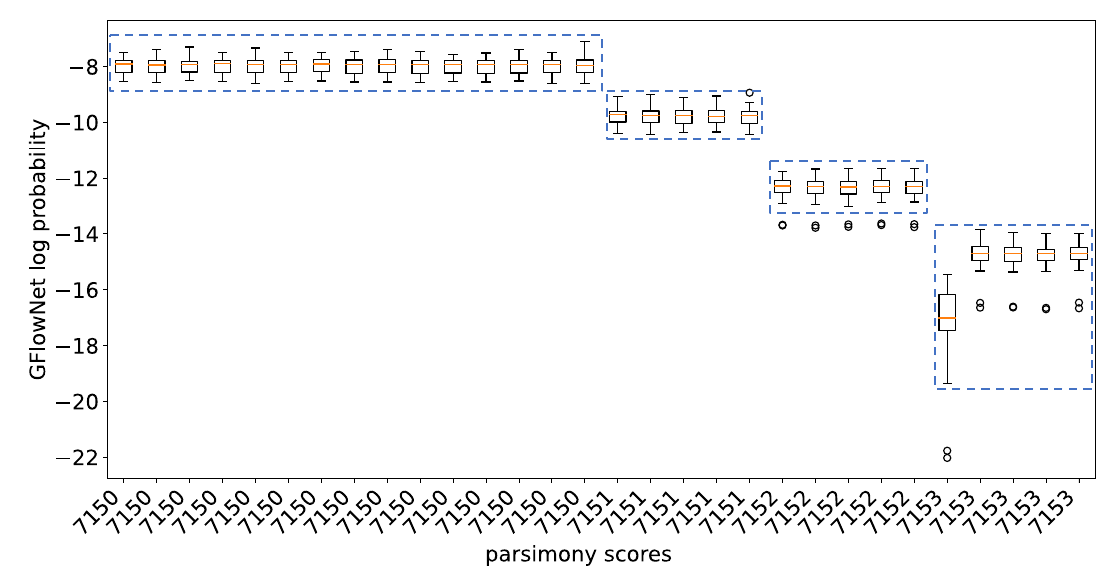}
         \caption{15 most-parsimonious trees with a score of 7150 have been identified for DS7.}
         \label{fig:parsimony_boxplots_ds7}
     \end{subfigure}
     \begin{subfigure}[b]{0.9\textwidth}
         \centering
         \includegraphics[width=\textwidth]{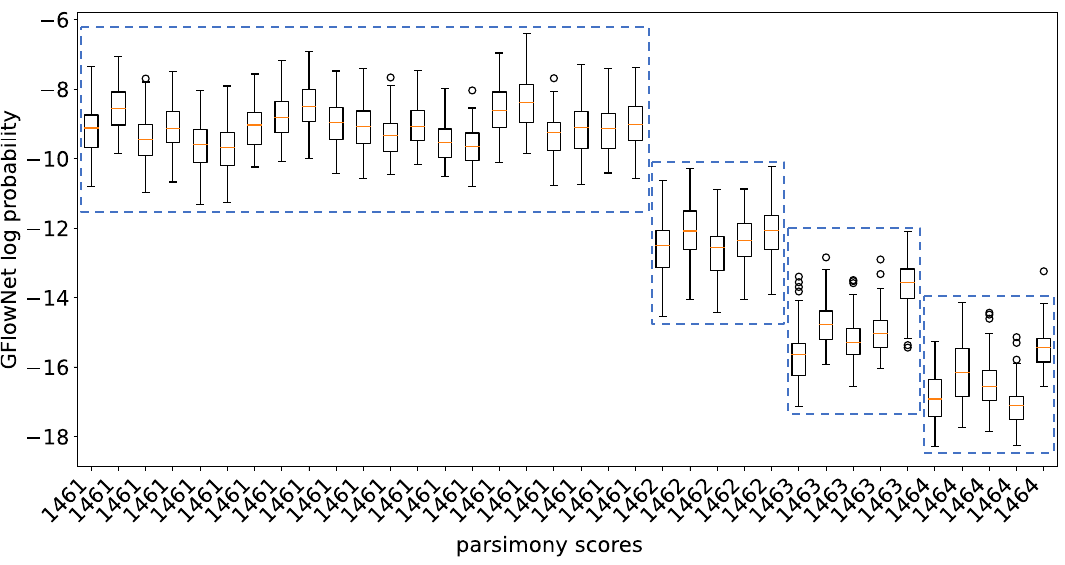}
         \caption{21 most-parsimonious trees with a score of 1461 have been identified for DS8.}
         \label{fig:parsimony_boxplots_ds8}
     \end{subfigure}
     \caption{(cont.)}
\end{figure}

\clearpage 
\section{Training curves}
The plot in Figure \ref{fig:training_g} illustrates the training curves for PhyloGFN-Bayesian across all eight datasets. In each dataset plot, the left side displays the training loss for every batch of 64 examples, while the right side shows the MLL computed for every 1.28 million training examples.

\begin{figure}[ht]
     \begin{subfigure}[b]{\textwidth}
         \centering
         \includegraphics[width=\textwidth]{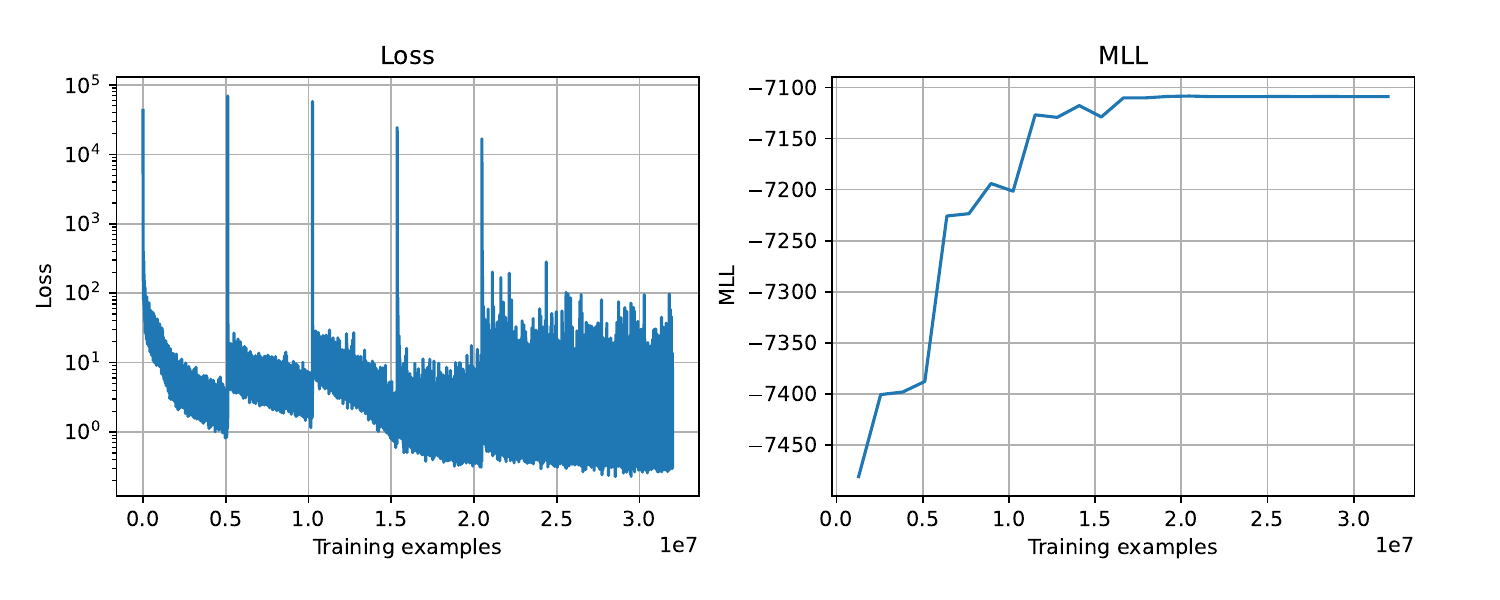}
         \caption{DS1}
         \label{fig:training_g_ds1}
     \end{subfigure}
    \vfill
     \begin{subfigure}[b]{\textwidth}
         \centering
         \includegraphics[width=\textwidth]{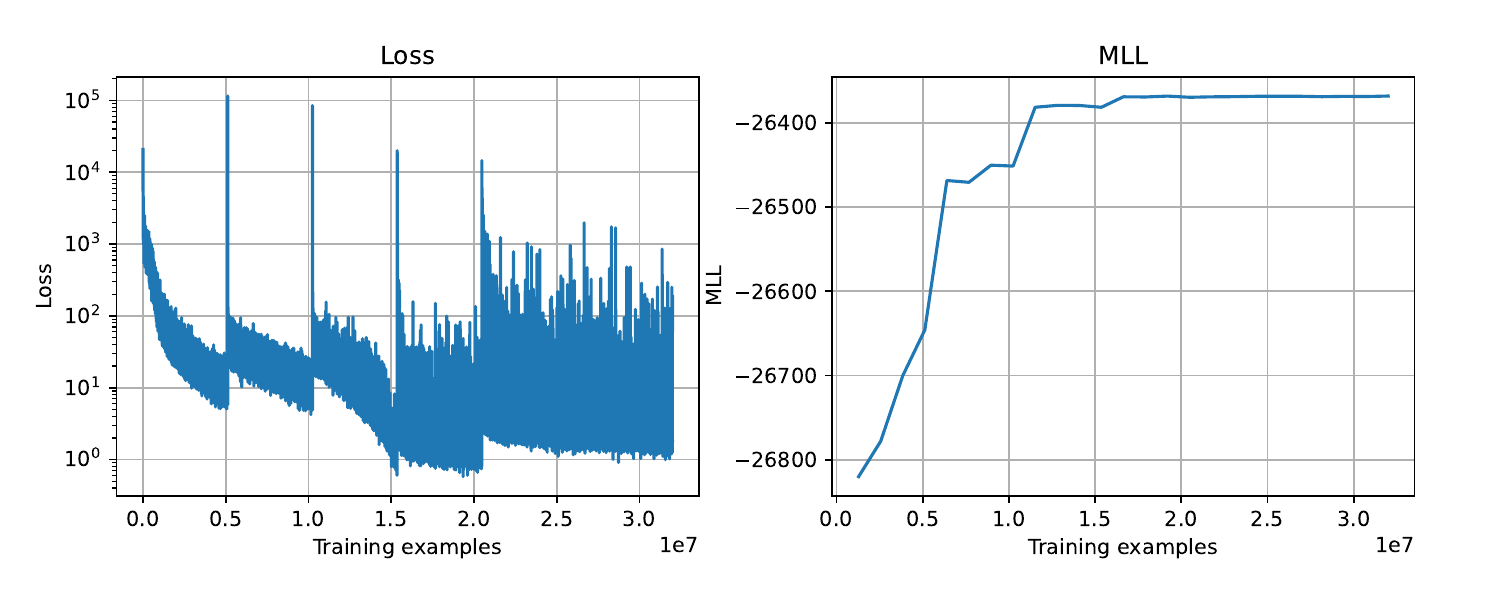}
         \caption{DS2}
         \label{fig:training_g_ds2}
     \end{subfigure}
     \vfill
     \begin{subfigure}[b]{\textwidth}
         \centering
         \includegraphics[width=\textwidth]{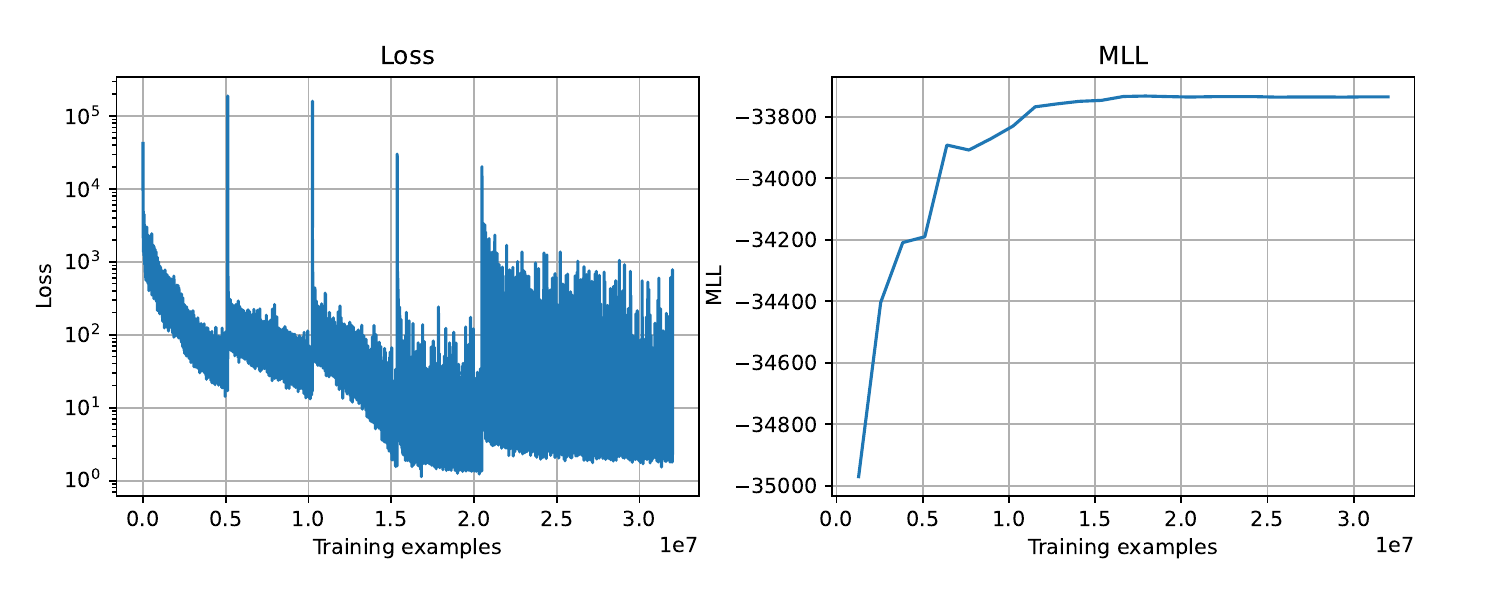}
         \caption{DS3}
         \label{fig:training_g_ds3}
     \end{subfigure}
     \caption{Training curves for DS1-DS8 }
      \label{fig:training_g}
\end{figure}

\begin{figure}\ContinuedFloat
     \centering
     \begin{subfigure}[b]{\textwidth}
         \centering
         \includegraphics[width=\textwidth]{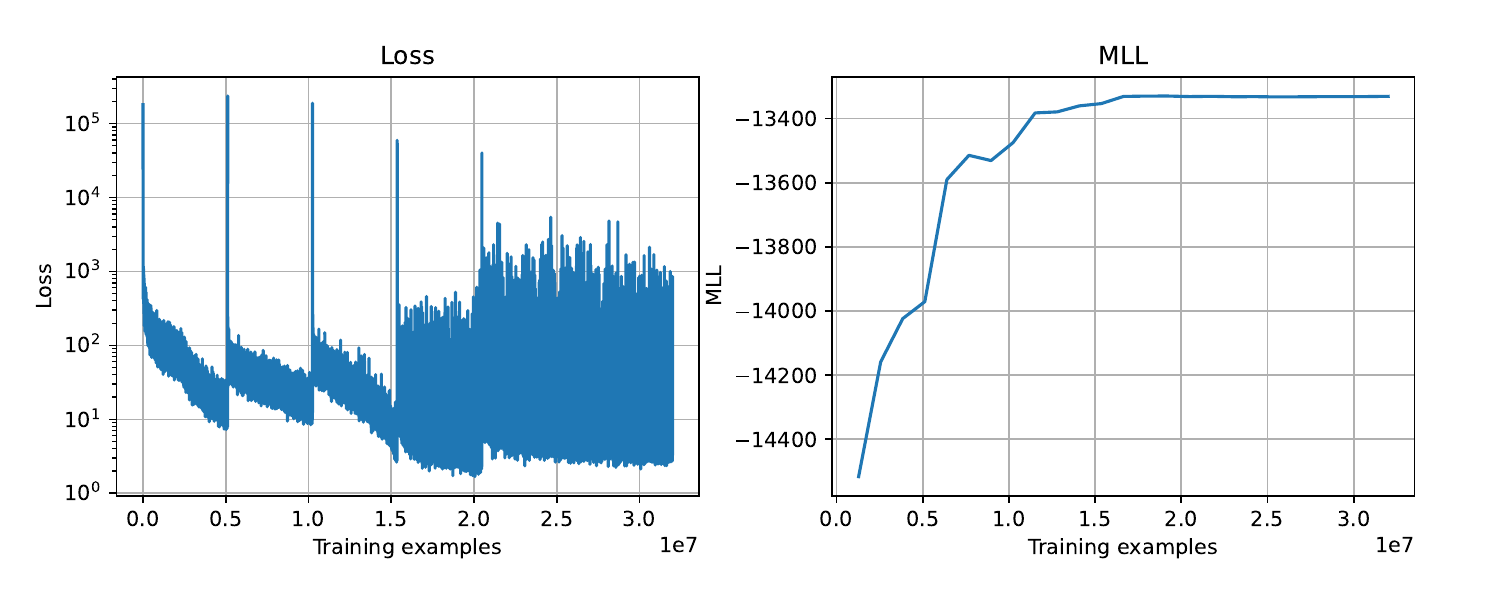}
         \caption{DS4}
         \label{fig:training_g_ds4}
     \end{subfigure}
    \vfill
     \begin{subfigure}[b]{\textwidth}
         \centering
         \includegraphics[width=\textwidth]{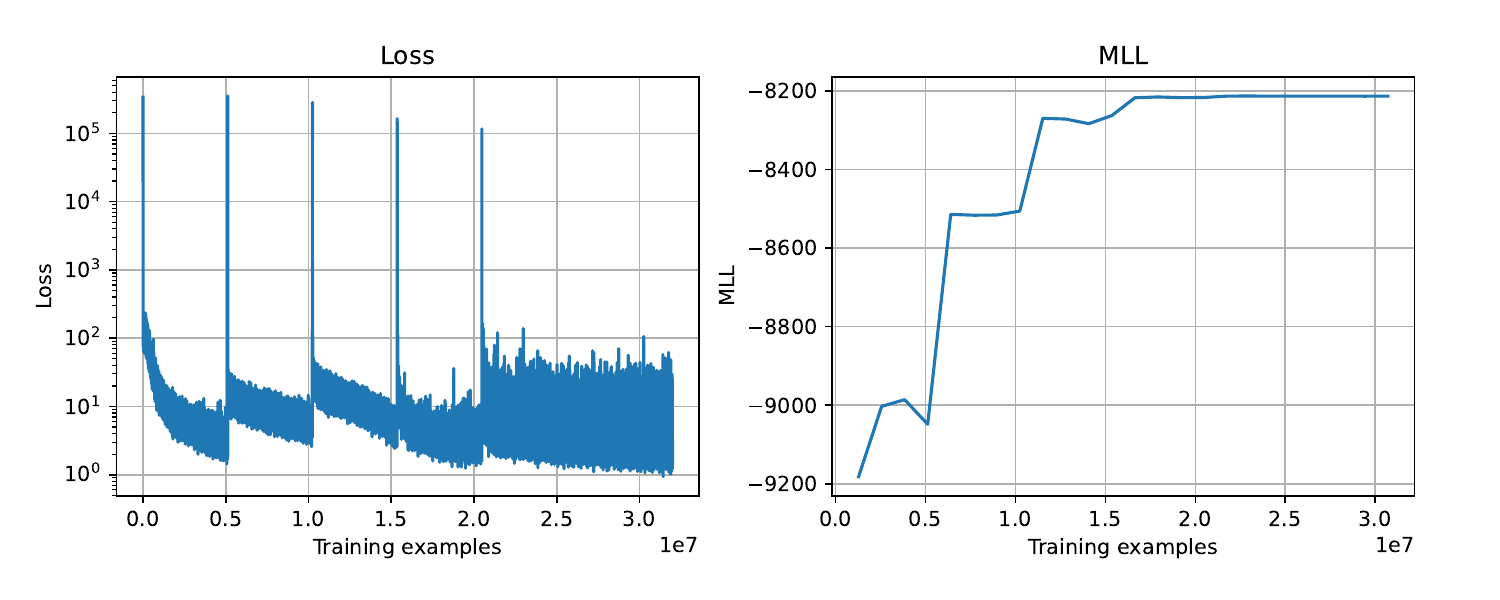}
         \caption{DS5}
         \label{fig:training_g_ds5}
     \end{subfigure}
     \vfill
     \begin{subfigure}[b]{\textwidth}
         \centering
         \includegraphics[width=\textwidth]{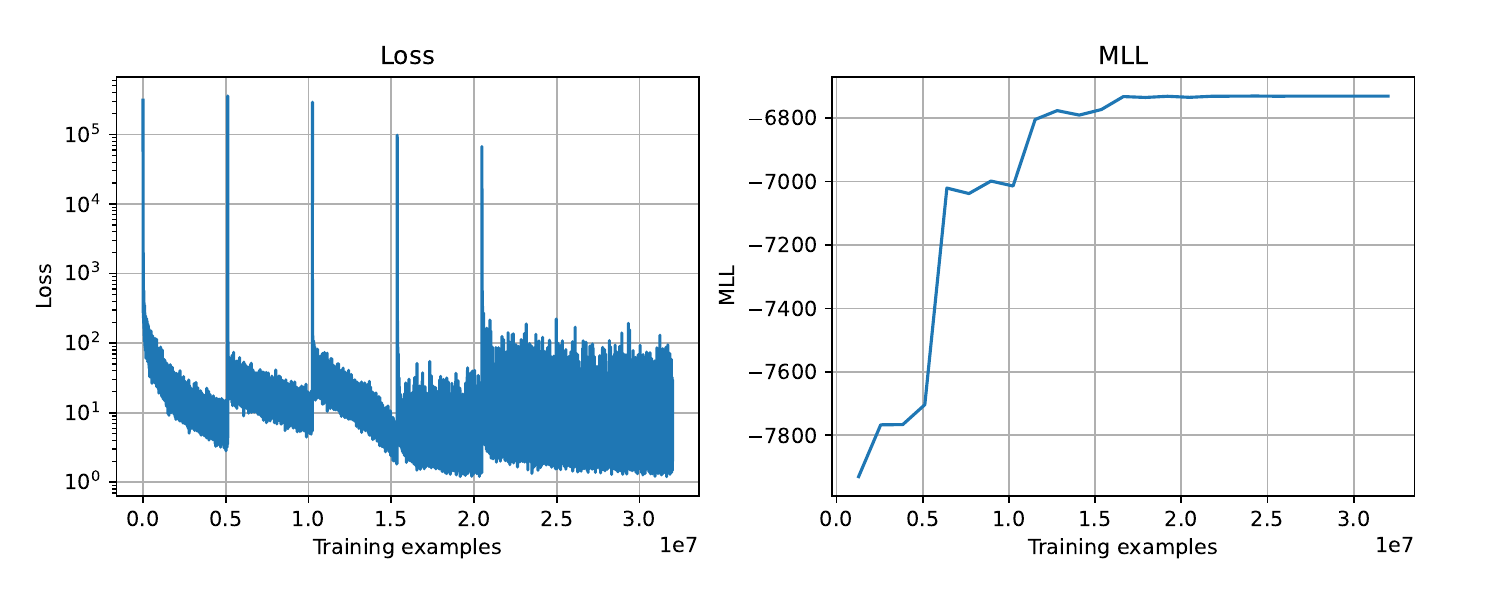}
         \caption{DS6}
         \label{fig:training_g_ds6}
     \end{subfigure}
     \vfill
\end{figure}

\begin{figure}[t]\ContinuedFloat
     \begin{subfigure}[b]{\textwidth}
         \centering
         \includegraphics[width=\textwidth]{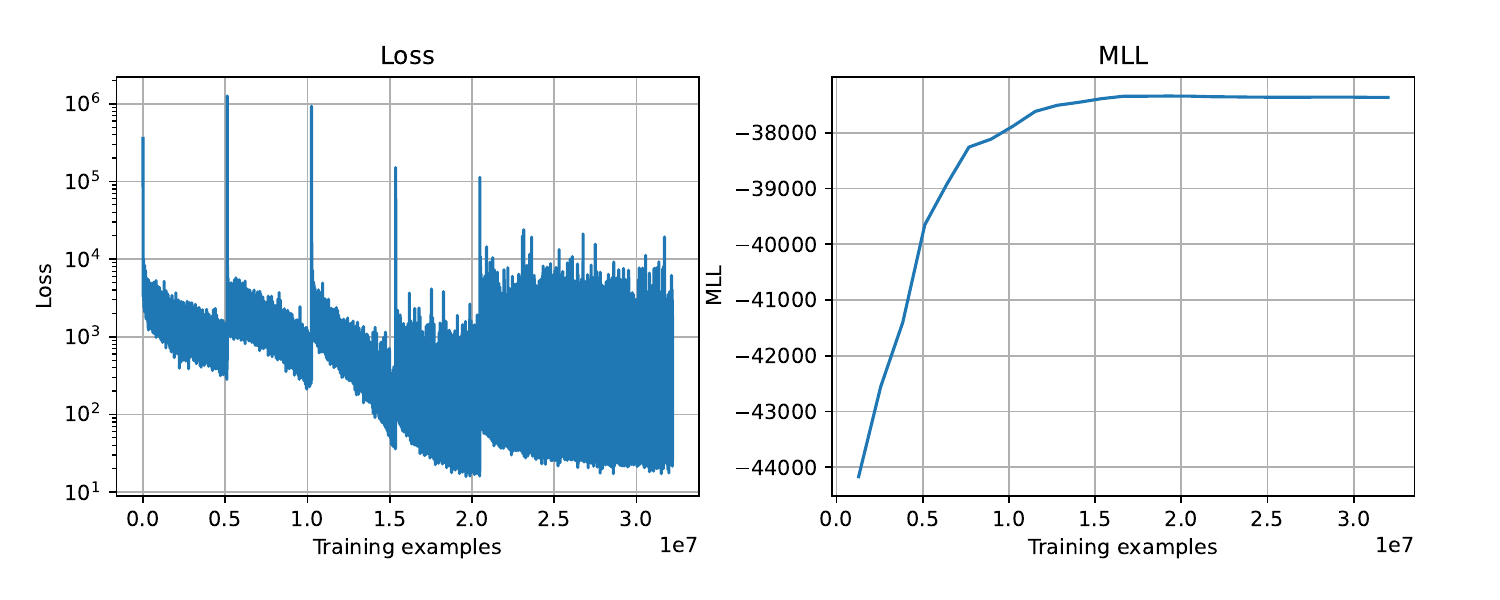}
         \caption{DS7}
         \label{fig:training_g_ds7}
     \end{subfigure}
     \begin{subfigure}[b]{\textwidth}
         \centering
         \includegraphics[width=\textwidth]{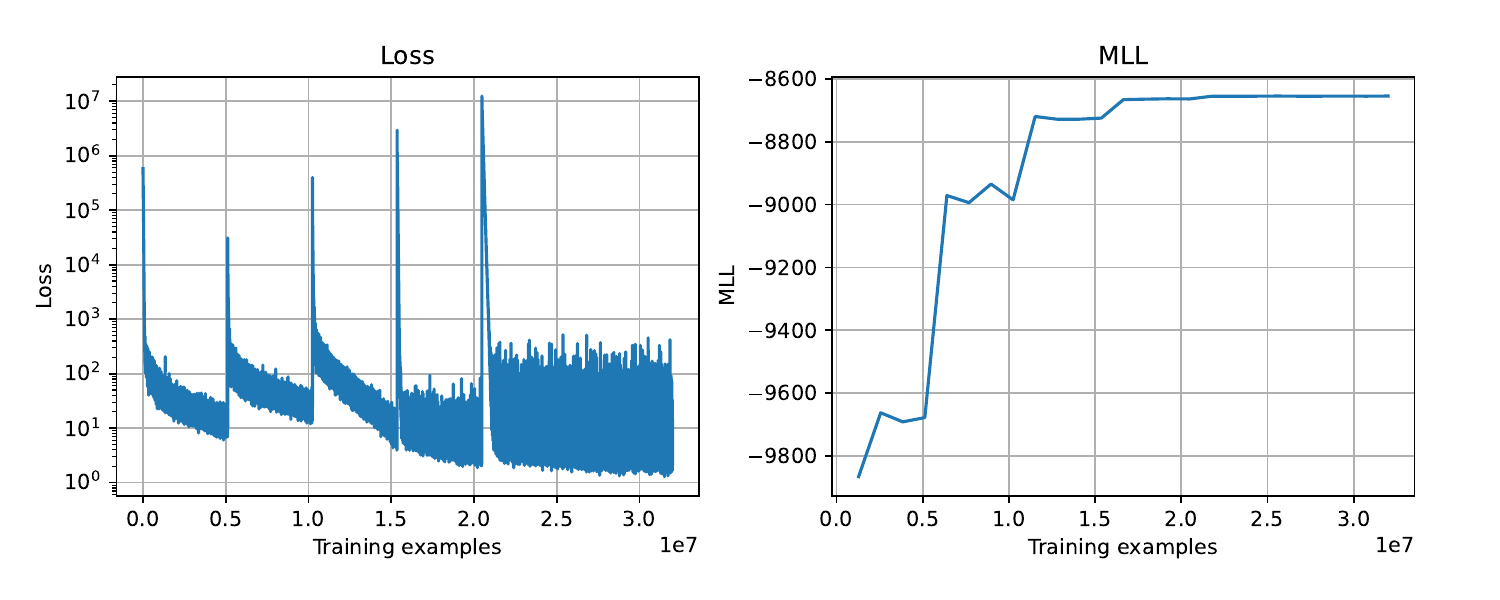}
         \caption{DS8}
         \label{fig:training_g_ds8}
     \end{subfigure}
\end{figure}

\end{document}